\documentclass[11pt]{article} 
\usepackage{template}
\usepackage[utf8]{inputenc}
\usepackage[english]{babel}

\usepackage{amsmath,amssymb,natbib,graphicx,url,algorithm2e}
\usepackage{times}
\usepackage{amsthm}
\usepackage{nicefrac}
\newtheorem{theorem}{Theorem}
\newtheorem{lemma}[theorem]{Lemma}
\newtheorem{corollary}[theorem]{Corollary}
\newtheorem{remark}[theorem]{Remark}
\newtheorem{definition}[theorem]{Definition}
\usepackage{color}
\usepackage[dvipsnames]{xcolor}
\definecolor{mydarkblue}{rgb}{0,0.08,0.45}
\usepackage[colorlinks=true,
    linkcolor=mydarkblue,
    citecolor=mydarkblue,
    filecolor=mydarkblue,
    urlcolor=mydarkblue]{hyperref}  
\usepackage{caption}
\usepackage{subcaption}
\usepackage{natbib}

\usepackage[inline, shortlabels]{enumitem}  

\begin{document}

\title{Finite Regret and Cycles with Fixed Step-Size via Alternating Gradient Descent-Ascent}
\author{
  James P. Bailey \\
  Singapore University\\
  of Technology and Design\\
  \small \texttt{james\_bailey@sutd.edu.sg}
  \And
  Gauthier Gidel \\
  Mila, Element AI, \\
  University of Montreal\\
  \small \texttt{gidelgau@mila.quebec}
  \And
  Georgios Piliouras \\
  Singapore University\\
  of Technology and Design\\
  \small \texttt{georgios@sutd.edu.sg}
}

\maketitle

\begin{abstract}
Gradient descent is arguably one of the most popular online optimization methods with a wide array of applications.  However, the standard implementation where agents simultaneously update their strategies yields several undesirable properties; strategies diverge  away from equilibrium and regret grows over time. In this paper, we eliminate these negative properties by introducing a different implementation to obtain finite regret via arbitrary fixed step-size. We obtain this surprising property by having agents take turns when updating their strategies. In this setting, we show that an agent that uses gradient descent obtains bounded regret -- regardless of how their opponent updates their strategies.  Furthermore, we show that in adversarial settings that agents' strategies are bounded and cycle when both are using the alternating gradient descent algorithm.
\end{abstract}
\vspace{-.5cm}
\begin{figure}[!htb]
    \centering
    \begin{subfigure}{.5\textwidth}
          \centering\captionsetup{justification=centering}
          \includegraphics[scale=.45]{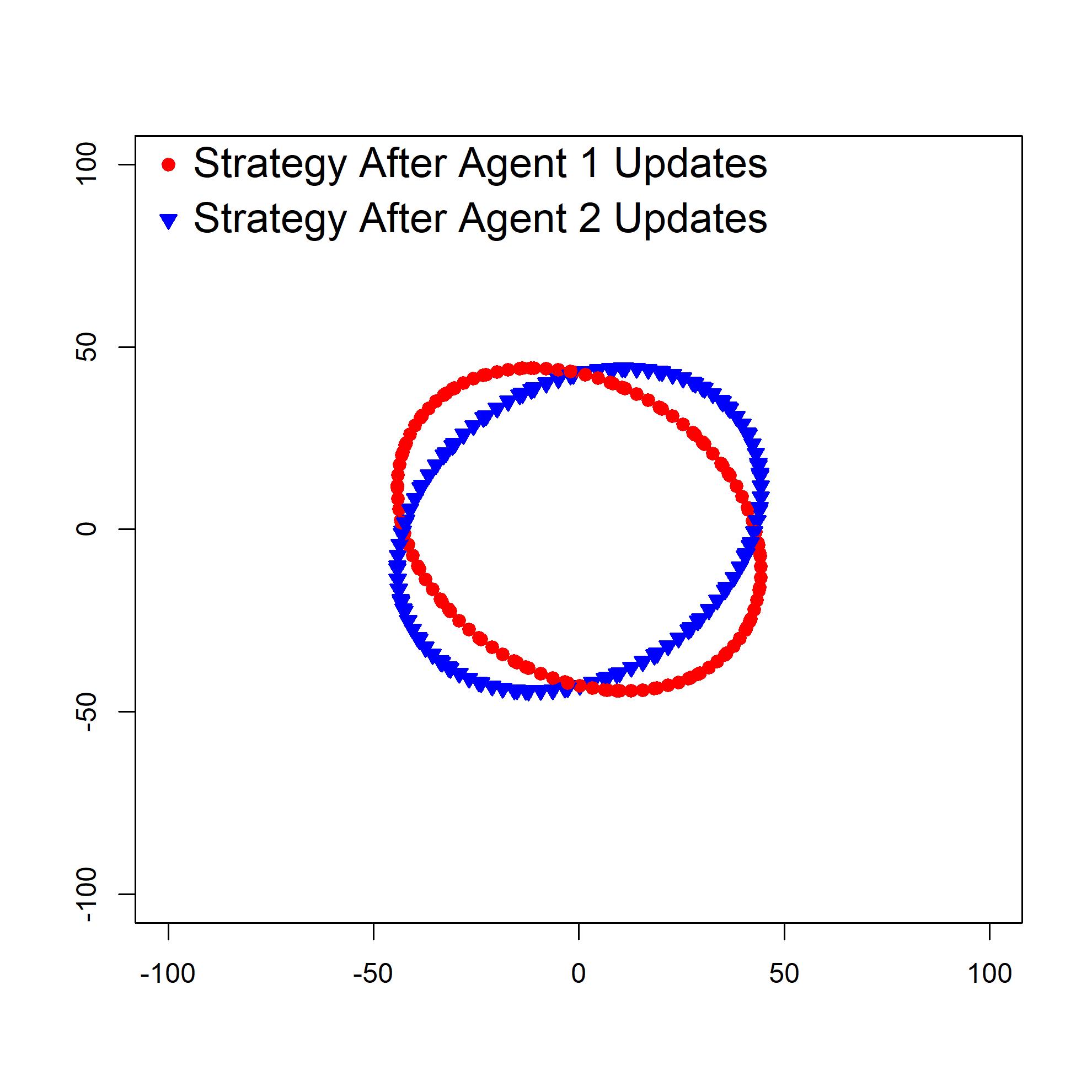}
          \caption{Player strategies cycle around the max-min equilibrium.}
          \label{fig:OpeningPoints}
    \end{subfigure}%
    \begin{subfigure}{.5\textwidth}
          \centering\captionsetup{justification=centering}
          \includegraphics[scale=.45]{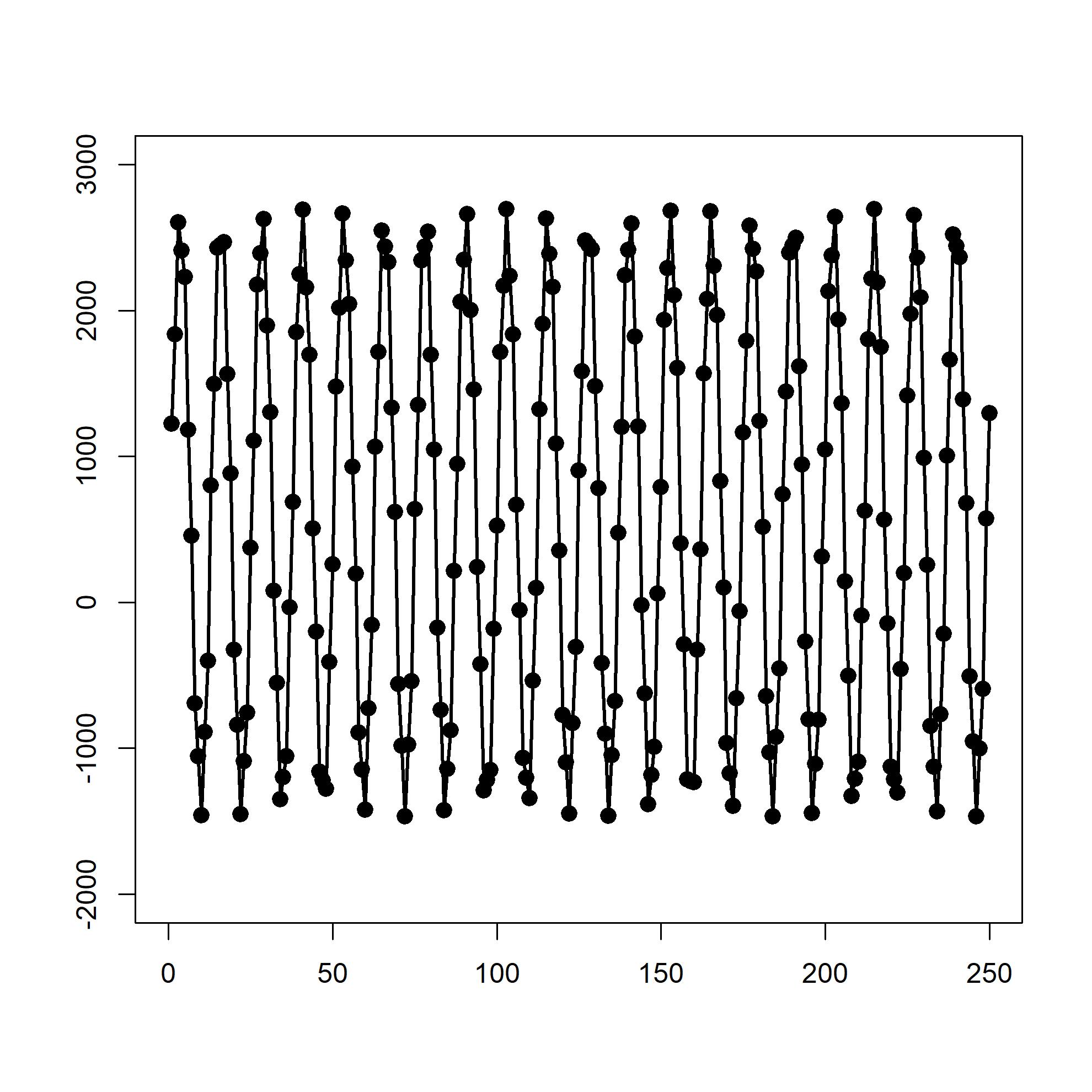}
          \caption{Agent 1's regret remains bounded while oscillating back and forth.}
          \label{fig:OpeningUtility}
    \end{subfigure}
    \caption{125 Iterations of Alternating Gradient Descent-Ascent applied in a zero-sum game with initial condition $(x_1^0,x_2^0)=(35,35)$, $A=[1]$ and learning rate $\eta_1=\eta_2=1/2$.}
    \label{fig:Opening}
\end{figure}
\section{Introduction}


Zero-sum games and more generally max-min optimization are amongst the most well studied settings in game theory. Dating back to classic work of  \citet{Neumann1928}, that initiated the field of game theory as a whole, it is well understood that zero-sum games admit a ``solution". The safety level that each agent can guarantee for themselves, if their were forced to commit to their strategies first, is exactly equal to the best case payoff they will get if they play second, with full information, against a rational opponent.  This fact that the ordering of the agents does not matter is captured by arguably the most famous aphorism in game theory, max-min is equal to min-max.

Despite the classically resolved issue of equilibrium computation in zero-sum games, the question of analyzing  dynamics  in zero-sum games is much less understood. Possibly, the most well known result in the area is that regret minimizing dynamics converge in a time average sense to max-min equilibria (e.g., \citet{Freund99schapire:adaptive}). However, up until recently, the day-to-day behavior of standard classes of online learning dynamics were not understood. For example, does the day-to-day behavior converge to equilibrium, does it diverge away from it, or does it cycle at a fixed distance from it? The answer to the above questions turned out to be, Yes, Yes and Yes! 
 Or, to be more precise, the answer depends critically on the choice of the dynamics.
 
 When studying dynamics in continuous-time, e.g., the continuous time-analogue of Multiplicative Weights Update, replicator dynamics, the dynamics "cycle" around at a constant Kullback-Leibler divergence from equilibrium as shown by \citet{Soda14}. 
 In fact, this result generalizes for all continuous-time variants of all Follow-the-Regularized Leader (FTRL) algorithms \citep{GeorgiosSODA18}. 
 Moreover, these dynamics have bounded (total/aggregate) regret \textit{in arbitrary games}. 
 This is an impressive level of regularity and predictability of the dynamics, which despite not being equilibrating, allow us to make strong predictions about their day-to-day behavior.  Behind this clockwork kind of regularity lies the fact that these dynamics are  Hamiltonian \citep{BaileyAAMAS19}.
 As in the case, e.g., of planetary orbits, or a pendulum, there is a lot of hidden structure in the motion, laws that bind and control the evolution of all particles. 
 
 Unfortunately, this level of regularity comes at a cost of using a continuous-time model. This is of course a simplifying modelling assumption. It does not capture the reality of how games, economic competition is played out in practice. More importantly, it fails to capture the reality of some modern engineering applications, such as Generative Adversarial Networks (GANs) \citep{goodfellow2014generative}, where online training algorithms compete against each other to improve two (opposing) AI algorithms. Hopefully, we could just naively discretize the aforementioned dynamics and their behavior would for the most part stay intact. Unfortunately, this is far from the truth.

 \citet{BaileyEC18} first proved that for all Follow-the-Regularized-Leader algorithms (e.g. Gradient Descent or Multiplicative Weights) diverge away from the Nash equilibrium in zero-sum games. This is a robust finding that holds regardless of the step-size that the agents are using, even if the agents are using different or shrinking step-sizes, or even if they are using different dynamics (i.e., mix-and-match regularizers). Finally, it even extends to network generalizations of zero-sum games.
 The proof by picture is as follows: If gradient descent-ascent in continuous-time moves along a Euclidean ball centered at the equilibrium then the naive discretization takes a discrete, non-negligible step along the tangent. Hence, the distance (radius) from the equilibrium grows and we keep moving away from the equilibrium.
  Even more distressingly, not only are equilibria unstable but furthermore the dynamics are formally chaotic as small perturbations of initial conditions are amplified exponentially fast \citep{cheung2019vortices}. In a nutshell, by discretizing gradient descent, moving from a differential equation to an actual implementable algorithm, all system regularity is lost. The discrete and continuous time behavior may only differ by a little bit in each step, but these errors snowball fast.
 
 Given the unstable nature of the naive, standard discretization of online gradient descent, several different training algorithms have been suggested which have been shown to converge provably to Nash equilibria in zero-sum games such as the extragradient method \citep{korpelevich1976extragradient} and its variants \citep{gidel2019a,mertikopoulos2019optimistic}, optimistic mirror descent \citep{rakhlin2013optimization,daskalakis2018training,daskalakis2018last} and some other methods using negative momentum or second order information \citep{gidel2019negative,balduzzi2018mechanics,2019arXiv190602027A}. Going back to the picture of simultaneous gradient descent-ascent as a tangent to a ball centered at equilibrium, these approaches "corrupt" the dynamics, so that the discrete-step is now facing the interior of the ball, more like a chord than a tangent, decreasing the distance from equilibrium and forcing convergence in the long run.
 
 We take a different approach when it comes to discretizing the system dynamics. We ask, as von Neumann did for equilibrium computation, does the ordering of the agents matter? What if the min and max agents did not update their behavior simultaneously but instead they took turns. This is actually common practice in training neural networks as no extra memory is needed to hold the previous state/parameters of any network. Even for economic competition in markets, this is a rather reasonable model with firms taking turns responding to the last move of the competition. Could it be that this standard alternating gradient descent-ascent implementation recovers some of the impressive regularities of the continuous-time model and if so to what extent?

\subsection*{Our Contributions} 
We study the behavior of gradient descent with fixed step-size in unconstrained two-agent zero-sum games. 
In a twist on the standard theory of online learning, we consider agents that take turns updating their strategies. 
We establish that if an agent uses gradient descent with arbitrary fixed step-size when agents are sequentially updating their strategies, then she obtains bounded regret (Theorem \ref{thm:bounded_regret}) as depicted in Figure \ref{fig:OpeningUtility}. 
Moreover, Theorem \ref{thm:bounded_regret} holds regardless of how her opponent updates his strategies and therefore the result immediately extends to non-zero-sum games. 
We establish this surprising property by showing that an agent's distance from optimality changes proportionally to her payoff in any given iteration (Lemma \ref{lem:change}).  
This allows us to compute both the regret and utility of an agent with only knowledge of the first and last strategy she used, regardless of how her opponent updates his strategies. 
The bound on regret quickly follows by considering the worst-case final strategy. 

We further explore the asymptotic properties of alternating gradient descent specifically in the setting of zero-sum games.  
We show that when agents use gradient descent sequentially that the strategies approximately cycle (Theorem \ref{thm:poincar_rec}) as depicted in Figure \ref{fig:OpeningPoints}.  
More formally, alternating gradient descent admits Poincar\'{e} recurrence in the setting of two-agent zero-sum games. 
Theorem~\ref{thm:poincar_rec} is established in two parts:
First, we show that the alternating gradient descent algorithm approximately preserves the distance to the equilibrium (Theorems~\ref{thm:bounded_orbits} and~\ref{thm:lower_bound}). 
Second, we show that the algorithm preserves volume when updating a measurable set of strategies. 
Together, these two properties imply recurrence. 

\section{Preliminaries}
A two-agent zero-sum game consists of two agents ${\cal N}=\{1,2\}$ where agent $i$ selects a strategy from $\mathbb{R}^{k_i}$.  
Utilities of both agents are determined via a payoff matrix $A\in \mathbb{R}^{k_1 \times k_2}$. 
Given that agent $i$ selects strategy $x_i\in \mathbb{R}^{k_i}$, agent 1 receives utility $\langle x_1, Ax_2\rangle$ and agent 2 receives utility $-\langle x_1, Ax_2\rangle$. 
Naturally, both agents want to maximize their payout resulting in the following max-min problem:
\begin{align}\tag{Zero-Sum Game}\label{eqn:ZeroSum}
    \max_{x_1\in \mathbb{R}^{k_1}}  \min_{x_2\in \mathbb{R}^{k_2}} \{ x_1\cdot A x_2 \}
\end{align}

\subsection{Gradient Descent with Simultaneous Play}
In many applications of game theory, agents know neither the payoff matrix nor their opponent's strategy.  
Instead, agents repeatedly play the zero-sum game while updating their strategies iteratively. 
One of the most common methods for updating strategies is the gradient descent algorithm.  
In gradient descent, an agent looks at her payout in the previous iteration and then updates her previous strategies by moving in a most beneficial direction.  
In the setting of (\ref{eqn:ZeroSum}), this corresponds to 
\begin{equation}\tag{SimGD}\label{eqn:GDSimultaneous}
    \begin{aligned}
      x_1^{t+1} &= x_1^t + \eta_1 Ax_2^t\\
      x_2^{t+1} &= x_2^t - \eta_2 A^\intercal x_1^t.
    \end{aligned}
\end{equation}
where $\eta_i$ corresponds to agent $i$'s step-size. 
The larger the step-size, the more rapidly an agent responds to information from previous iterations.
Gradient descent is often implemented with time variant step-sizes -- most commonly with $\eta_i^t\in \Theta(1/\sqrt{t})$. 
However, in this paper we focus on fixed step-sizes. 

In this formulation, agents simultaneously update their strategy.  That is, $x_1^t$ and $x_2^t$ are played at the same time.  
As a result, the cumulative utility of (\ref{eqn:GDSimultaneous}) (or any simultaneous update algorithm) for agent 1 after $T$ iterations is 
\begin{align}
    \sum_{t=0}^T \langle x_1^t,Ax_2^t\rangle.
\end{align}

\subsection{Gradient Descent with Alternating Play}
In many application of game theory, agents do not update their strategies until they see a change in the system.  
In the case of two-agent games, this corresponds to agents updating their strategies sequentially, i.e., agent 1 updates her strategy, then agent 2 updates his strategy, then agent 1 updates her strategy and so on. 
In the setting of gradient descent, this corresponds to 
\begin{equation}\tag{AltGD}\label{eqn:GDAlternating}
    \begin{aligned}
      x_1^{t+1} &= x_1^t + \eta_1 Ax_2^t\\
      x_2^{t+1} &= x_2^t - \eta_2 A^\intercal x_1^{t+1}.
    \end{aligned}
\end{equation}
Computing the total utility when agents alternate their updates is slightly different. 
Agent 1 plays strategy $x_1^t$ against $x_2^t$ when agent 2 updates his strategy and plays $x_1^{t+1}$ against $x_2^t$ when agent 1 updates her strategy.  
This results in the following cumulative utility after agent $1$ updates her strategy $T$ times.
\begin{align}
    \sum_{t=0}^T \langle x_1^{t+1}+x_1^t,Ax_2^t\rangle
\end{align}

\subsection{Regret}

The standard way of measuring the performance of an algorithm is by a notion known as regret.  
Regret compares the total utility gained by a fixed strategy $x_1$ to the utility agent 1 receives by using an algorithm. 
In the case of simultaneous updates, as in (\ref{eqn:GDSimultaneous}), regret is formally given by
\begin{align}\tag{Regret for Alternating Play}
    \langle x_1,\sum_{t=0}^T Ax_2^t\rangle - \sum_{t=0}^T \langle x_1^t,Ax_2^t\rangle 
\end{align}
where the second term corresponds to the utility agent 1 received by using (\ref{eqn:GDSimultaneous}) and first term corresponds to the utility she would of received if she played $x_1$ on every iteration (assuming agent $2$ still uses the strategies $\{x_2\}_{t=1}^T$).
In the case of constrained optimization, regret is typically evaluated where $x_1$ is the best fixed strategy, i.e., the optimizer of $\langle x_1, \sum_{t=1}^TAx_2^t\rangle$. 
However, in unconstrained optimization there is rarely an optimizer to this expression. 

When agents update sequentially, regret is computed slightly differently.  
As discussed in the previous section, agent 1 sees the strategy $x_2^t$ twice -- once when agent 1 updates and once when agent 2 updates.  
As a result, agent 1's regret when updating sequentially is 
\begin{align}\tag{Regret for Sequential Updates}
    \langle 2x_1,\sum_{t=0}^T Ax_2^t\rangle - \sum_{t=0}^T \langle x_1^{t+1}+x_1^t,Ax_2^t\rangle.
\end{align}

Typically an algorithm is said to perform well if its regret is bounded above by a sublinear function with respect to any fixed strategy. 
If regret grows at a rate of $o(T)$ with respect to a fixed strategy, then the average regret grows as at a rate of $o(1)$ and, in the limit, the algorithm performs no worse on average as that fixed strategy.

\subsection{Continuous-Time Gradient Descent: A Motivation for Alternating Play}

The primary motivation for this paper is the continuous-time analogue of gradient descent.  
In particular, the integration technique used to obtain (\ref{eqn:GDAlternating}) from the continuous-time analogue well approximates continuous-time and therefore offer similar guarantees for behavior in the system. 
The continuous-time analogue of (\ref{eqn:GDSimultaneous}) and (\ref{eqn:GDAlternating}) is
\begin{equation}\tag{ContinuousGD}\label{eqn:GDContinuous}
    \begin{aligned}
      x_1(t) &=  x_1(0) + \eta_1\int_0^tAx_2(s)ds\\
      x_2(t) &= x_2(0) - \eta_2\int_0^tA^\intercal x_1(s)ds \,,
    \end{aligned}
\end{equation}
where $\eta_i$ denotes the learning rate used by agent $i$. 
(\ref{eqn:GDSimultaneous}) is obtained from (\ref{eqn:GDContinuous}) via Euler integration, i.e., $x_i^t$ is simply the first order approximation of $x_i(t)$ from the point $x_i(t-1)$. 

\cite{GeorgiosSODA18} showed that (\ref{eqn:GDContinuous}) cycles around the equilibrium of the system on convex orbits.  
Therefore, (\ref{eqn:GDSimultaneous}) should diverge from the equilibrium since it is the first order approximation of  (\ref{eqn:GDContinuous}). 
Indeed, this is first formally shown for a more general class of update rules including gradient descent and multiplicative weights in a paper by \citet{BaileyEC18} and for gradient descent in unconstrained bilinear games by~\citet{gidel2019negative}.
Moreover, \cite{GeorgiosSODA18} showed that \eqref{eqn:GDContinuous} has a surprising property; if agent 1 uses \eqref{eqn:GDContinuous} and agent 2 uses any continuous update rule, then agent 1 obtains bounded regret. 
Recall from the previous section that sublinear regret is the desired property in order to obtain optimal convergence results for online learning.  
Therefore, \eqref{eqn:GDContinuous} obtains impressive regret and convergence guarantees. 

Unfortunately, continuous-time algorithms are difficult to run and online optimization typically relies on discrete-time algorithms. 
Regrettably, standard discrete-time algorithms fall short relative to their continuous-time analogues;
it has long been believed that (\ref{eqn:GDSimultaneous}) with fixed step-size has linear regret and therefore offers no nice long-term guarantees. 
It can however obtain $\Theta(\sqrt{T})$ regret by employing decaying step-sizes or after carefully selecting a fixed step-size with prior knowledge of how long the algorithm will be used~\citep{foster2016learning,hazan2016introduction}. 
More recently, \cite{BaileyArxiv19} showed that (\ref{eqn:GDSimultaneous}) with arbitrary  fixed step-size also obtains $\Theta(\sqrt{T})$ regret in bounded 2-dimensional zero-sum games and offered experimental evidence to suggest the result carries over to higher dimensions.
However, even these improved guarantees fall short of the bounded regret obtained by (\ref{eqn:GDContinuous}). 

\cite{BaileyAAMAS19} recently offered an explanation for this contrast. 
They showed that (\ref{eqn:GDContinuous}) forms something known as a Hamiltonian system -- a common dynamical system studied in mathematical physics that explains things like planetary orbits and oscillating springs. 
These systems conserve energy -- in the case of (\ref{eqn:GDContinuous}), energy corresponds to the combined norm of the strategies $\nicefrac{||x_1||^2}{\eta_1}+\nicefrac{||x_2||^2}{\eta_2}$ partially explaining the cyclic behavior found by \cite{GeorgiosSODA18}. 
However, Euler integration is well-known to be a poor estimator of Hamiltonian systems and it is therefore unsurprising that (\ref{eqn:GDContinuous}) differs greatly from (\ref{eqn:GDSimultaneous}). 

To obtain behavior that is similar to (\ref{eqn:GDContinuous}), we must use an integration technique that better preserves the dynamics of the original system. 
Fortunately, there is a particular class of integrators known as symplectic integrators that are well-known for their ability to approximate Hamiltonian systems~\citep{hairer2006geometric}. 
In particular, \cite{Hairer05} showed that symplectic integrators approximately preserve the energy of a Hamiltonian system for exponentially long periods of time relative to the inverse of the fixed step-size.  
In the setting of Gradient Descent, that means that we could approximately preserve the energy in (\ref{eqn:GDContinuous}) for arbitrarily long periods of time by applying a symplectic integrator with sufficiently small step-size. 

Regrettably, many symplectic integrators would require agents to coordinate when updating their strategies.  
By the very nature of a zero-sum game, this coordination would be unnatural and could only be applied in artificial settings such as GANS. Moreover, in the context of machine learning, we would like to use a sufficiently large step-size in order to reduce the training time. A vanishing step size with an increasing horizon would lead to a prohibitively slow training method.

It is straightforward to show that (\ref{eqn:GDAlternating}) is obtained from (\ref{eqn:GDContinuous}) via the symplectic integration technique known as Verlet integration or leapfrogging. 
By the work of \cite{Hairer05}, we therefore expect that (\ref{eqn:GDAlternating}) with fixed step-size should behave similarly to (\ref{eqn:GDContinuous}) -- at least for exponentially long periods of time relative to the step-size.
Indeed we actually show a stronger result; 
(\ref{eqn:GDAlternating}) with a fixed step-size has the same guarantees of (\ref{eqn:GDContinuous}) forever. 
Specifically, if agent 1 uses  (\ref{eqn:GDAlternating}) with arbitrary fixed step-size then she obtains bounded regret regardless of how her agent's opponent updates. 
Moreover, if both agents use (\ref{eqn:GDAlternating}) then the quantity $\big(\nicefrac{||x_1^t||^2}{\eta_1}+\nicefrac{||x_2^t||^2}{\eta_2} + \langle x_1^t, Ax_2^t\rangle \big)$ is preserved and the strategies $\{x_1^t,x_2^t\}_{t=0}^\infty$ cycle for $\sqrt{\eta_1\eta_2} \leq \tfrac{2}{\|A\|}$, allowing step-sizes that do not vanish with an infinite horizon. 

We proceed by proving bounded regret in Section \ref{sec:regret} and recurrent behavior in Section \ref{sec:cycle}.

\section{Bounded Regret with Fixed Step-Size in Gradient Descent.}\label{sec:regret}

In this section, our focus we will be on the regret generated by an agent playing according to \eqref{eqn:GDAlternating}. Interesting, our result holds no matter how the opponent updates its strategy. This general setting is particularly interesting because it is able to model an environment with only partial information where the agents might even not know that they are playing a game.

Before stating the main theorem of this section, we present a lemma that provides an interpretation of each agent's payoff in terms of energy fluctuation. The norm of an agent's strategy, rescaled by its step-size, can be seen as an energy that varies proportionally to its payoff.

\begin{lemma}\label{lem:change}When agent $1$ updates via (\ref{eqn:GDAlternating}), the size of $x_1^t$ increases proportionally to agent $1$'s payoff since her update in iteration $t$.  Formally,
\begin{align}
    \frac{||x_1^{t+1}||^2-||x_1^t||^2}{\eta_1}&=\langle x_1^{t+1}+x_1^t, Ax_2^t\rangle.\label{eqn:agent1}
\end{align}
Similarly, when agent $2$ updates via  (\ref{eqn:GDAlternating}), the size of $x_2^t$ increases proportionally to agent $2$'s payoff since his update in iteration $t$.  Formally,
\begin{align}
    \frac{||x_2^{t+1}||^2-||x_2^t||^2}{\eta_2}&=-\langle x_1^{t+1}, A(x_2^{t+1}+x_2^t)\rangle\label{eqn:agent2}.
\end{align}
\end{lemma}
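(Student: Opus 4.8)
The plan is to use the elementary ``difference of squares'' identity for inner-product norms, namely $\norm{a}^2 - \norm{b}^2 = \langle a-b,\, a+b\rangle$, and then substitute the relevant update rule from \eqref{eqn:GDAlternating} into the factor $a-b$.

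For \eqref{eqn:agent1}, I would first write
\begin{align}
    \norm{x_1^{t+1}}^2 - \norm{x_1^t}^2 = \langle x_1^{t+1} - x_1^t,\; x_1^{t+1} + x_1^t\rangle.
\end{align}
Then I would note that the \eqref{eqn:GDAlternating} update for agent $1$ gives exactly $x_1^{t+1} - x_1^t = \eta_1 A x_2^t$, so the right-hand side becomes $\langle \eta_1 A x_2^t,\; x_1^{t+1} + x_1^t\rangle = \eta_1 \langle x_1^{t+1} + x_1^t,\; A x_2^t\rangle$ by bilinearity and symmetry of the inner product. Dividing through by $\eta_1$ yields \eqref{eqn:agent1}.

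For \eqref{eqn:agent2} the argument is symmetric but must respect the sequential nature of \eqref{eqn:GDAlternating}: here the update for agent $2$ reads $x_2^{t+1} - x_2^t = -\eta_2 A^\intercal x_1^{t+1}$, using the \emph{already-updated} iterate $x_1^{t+1}$. Applying the same difference-of-squares identity gives $\norm{x_2^{t+1}}^2 - \norm{x_2^t}^2 = \langle -\eta_2 A^\intercal x_1^{t+1},\; x_2^{t+1} + x_2^t\rangle$, and moving $A$ across the inner product (i.e., $\langle A^\intercal u, v\rangle = \langle u, A v\rangle$) produces $-\eta_2 \langle x_1^{t+1},\; A(x_2^{t+1} + x_2^t)\rangle$; dividing by $\eta_2$ gives \eqref{eqn:agent2}.

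I do not anticipate any real obstacle: the only things to be careful about are (i) applying the correct, order-sensitive update equation for each agent — in particular using $x_1^{t+1}$ rather than $x_1^t$ in the agent-$2$ computation, which is the whole point of alternating play — and (ii) keeping track of the transpose so that the bilinear form is expressed consistently as $\langle x_1, A x_2\rangle$. Everything else is a two-line manipulation.
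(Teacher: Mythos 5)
Your proposal is correct, and it is essentially the same computation as the paper's proof: the paper expands $\norm{x_1^t+\eta_1 Ax_2^t}^2-\norm{x_1^t}^2$ and recombines using $x_1^{t+1}-x_1^t=\eta_1 Ax_2^t$, which is just your difference-of-squares factorization done in the opposite order, and likewise for agent $2$ with the updated iterate $x_1^{t+1}$ and the transpose identity. No gaps.
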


\begin{proof}
Recall the update rule for agent 1 that updates via \eqref{eqn:GDAlternating} is
\begin{align} 
    x_1^{t+1} &= x_1^t + \eta_1 Ax_2^t
\end{align}
Thus,
\begin{align}
    ||x_1^{t+1}||^2-||x_1^t||^2&=||x_1^t+\eta_1Ax_2^t||^2-||x_1^t||^2\\
    &= ||\eta_1Ax_2^t||^2+2\eta_1\langle x_1^t, Ax_2^t\rangle \\
    &=\eta_1\langle \eta_1Ax_2^t, Ax_2^t\rangle +2\eta_1\langle x_1^t, Ax_2^t\rangle \\
    &=\eta_1\langle x_1^{t+1}-x_1^t, Ax_2^t\rangle +2\eta_1\langle x_1^t, Ax_2^t\rangle \\
    &=\eta_1\langle x_1^{t+1}+x_1^t, Ax_2^t\rangle
\end{align}
completing the proof for agent 1.   
Similarly for agent 2, the updates via \eqref{eqn:GDAlternating} is 
\begin{align} 
    x_2^{t+1} &= x_2^t - \eta_2 A^\intercal x_1^{t+1}
\end{align}
and thus, we get,
\begin{align}
    ||x_2^{t+1}||^2-||x_2^t||^2&=||x_2^t-\eta_2A^\intercal x_1^{t+1}||^2-||x_2^t||^2\\
    &= ||\eta_2A^\intercal x_1^{t+1}||^2-2\eta_2\langle  x_1^{t+1},Ax_2^t\rangle \\
    &=\eta_2\langle A^\intercal x_1^{t+1}, \eta_2A^\intercal x_1^{t+1}\rangle -2\eta_2\langle  x_1^{t+1},Ax_2^t\rangle \\
    &=\eta_2\langle A^\intercal x_1^{t+1}, x_2^{t}-x_2^{t+1}\rangle-2\eta_2\langle  x_1^{t+1},Ax_2^t\rangle\\
    &=-\eta_2\langle x_1^{t+1}, A(x_2^{t+1}+x_2^t)\rangle
\end{align}
completing the proof of the lemma.
\end{proof}

From this lemma, an explicit bound on the regret of an agent that uses \eqref{eqn:GDAlternating}
 easily follows 
independently of what update rule her opponent uses. 
This result, is a significant improvement in comparison to the $\Theta(\sqrt{T})$ regret of~\eqref{eqn:GDSimultaneous} and surprisingly matches the result on continuous-time gradient descent provided by~\citet[Thm.~3.1]{GeorgiosSODA18}.

\begin{theorem}\label{thm:bounded_regret} If agent 1 uses (\ref{eqn:GDAlternating}) with an arbitrary fixed step-size $\eta_1$, then she obtains bounded regret with respect to any fixed strategy $x_1$ regardless of how her opponent updates his strategies. 
\end{theorem}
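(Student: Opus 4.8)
The plan is to express agent 1's regret in closed form using Lemma~\ref{lem:change} and then bound the resulting expression by a quantity that does not depend on $T$. First, observe that summing the identity~\eqref{eqn:agent1} over $t = 0, \dots, T$ telescopes on the left-hand side, giving
\begin{align}
    \sum_{t=0}^T \langle x_1^{t+1}+x_1^t, Ax_2^t\rangle = \frac{\|x_1^{T+1}\|^2 - \|x_1^0\|^2}{\eta_1}.
\end{align}
This is precisely the cumulative utility that agent 1 actually earns over the sequential play, so the ``realized utility'' term in the regret expression depends only on the first and last iterates $x_1^0$ and $x_1^{T+1}$, regardless of how agent 2 updates.

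Next I would handle the benchmark term $\langle 2x_1, \sum_{t=0}^T Ax_2^t\rangle$. Since agent 1's update rule is $x_1^{t+1} = x_1^t + \eta_1 Ax_2^t$, we have $Ax_2^t = (x_1^{t+1}-x_1^t)/\eta_1$, so $\sum_{t=0}^T Ax_2^t = (x_1^{T+1}-x_1^0)/\eta_1$, another telescoping sum. Substituting both closed forms into the definition of Regret for Sequential Updates yields
\begin{align}
    \mathrm{Regret}_T = \frac{2\langle x_1, x_1^{T+1}-x_1^0\rangle}{\eta_1} - \frac{\|x_1^{T+1}\|^2 - \|x_1^0\|^2}{\eta_1} = \frac{\|x_1\|^2 - \|x_1^0 - x_1\|^2 - \|x_1^{T+1} - x_1\|^2 + \|x_1\|^2}{\eta_1},
\end{align}
after completing the square; more cleanly, $\mathrm{Regret}_T = \tfrac{1}{\eta_1}\big(\|x_1^0 - x_1\|^2 - \|x_1^{T+1} - x_1\|^2\big) + C$ for a constant $C$ depending only on $x_1$ and $x_1^0$. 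The key point is that the only $T$-dependent piece is $-\|x_1^{T+1}-x_1\|^2/\eta_1 \le 0$, so the regret is bounded above by $\|x_1^0 - x_1\|^2/\eta_1$ (or the appropriate constant after carefully tracking the algebra), uniformly in $T$ and independent of agent 2's behavior.

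The main obstacle is really just bookkeeping: getting the factor of $2$ in the sequential regret definition to line up correctly with the telescoped utility sum from Lemma~\ref{lem:change}, and making sure the completion of the square is done so that the $x_1^{T+1}$-dependent term appears with a favorable (non-positive) sign. There is no analytic difficulty — no limiting argument, no spectral condition on $A$ — which is exactly what makes the bounded-regret conclusion surprising; it falls out of two telescoping sums and one algebraic identity. I would close by remarking that this matches the continuous-time guarantee of \citet[Thm.~3.1]{GeorgiosSODA18} and, since nothing about agent 2's dynamics was used, extends verbatim to general (non-zero-sum) games.
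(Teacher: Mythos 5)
Your proposal is correct and follows essentially the same route as the paper: telescope the utility via Lemma~\ref{lem:change}, telescope the benchmark term via $\sum_{t=0}^T Ax_2^t = (x_1^{T+1}-x_1^0)/\eta_1$, and then bound the only $T$-dependent term, which is exactly the paper's maximization over $x_1^{T+1}$ phrased as completing the square (both yield the bound $\|x_1^0-x_1\|^2/\eta_1$). Note only that your first displayed ``completed square'' has a sign slip, but your cleaner form $\mathrm{Regret}_T = \tfrac{1}{\eta_1}\bigl(\|x_1^0-x_1\|^2 - \|x_1^{T+1}-x_1\|^2\bigr)$ (in fact with $C=0$) is exact and the conclusion stands.
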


\begin{proof}
    Agent 1's regret with respect to strategy $x_1$ is
    \begin{align}
    \langle 2x_1, \sum_{t=0}^TAx_2^t\rangle -\sum_{t=0}^T \langle x_1^{t+1}+x_1^t,Ax_2^t\rangle &=
    \frac{\langle 2x_1, x_1^{T+1}-x_1^0\rangle}{\eta_1} -\sum_{t=0}^T \frac{||x_1^{t+1}||^2-||x_1^t||^2}{\eta_1}\\
    &=  \frac{\langle 2x_1, x_1^{T+1}-x_1^0\rangle}{\eta_1}-\frac{ ||x_1^{T+1}||^2- ||x_1^{0}||^2}{\eta_1}\\
    &=\frac{\langle 2x_1-x_1^{T+1},x_1^{T+1}\rangle- \langle 2x_1-x_1^0,x_1^0\rangle}{\eta_1}\label{eqn:Regret}\\
    &\leq \frac{\langle x_1^0- 2x_1,x_1^0\rangle + \|x_1\|^2}{\eta_1}
\end{align}
since the expression $x_1^{T+1} \mapsto \langle 2x_1-x_1^{T+1},x_1^{T+1}\rangle$ is maximized when $x_1^{T+1}=x_1$. 
\end{proof}
\begin{remark}
In Lemma \ref{lem:change} and Theorem \ref{thm:bounded_regret}, we have shown that we can compute agent $1's$ utility and regret with only knowledge of the first and last strategy she played.  
Moreover, neither proof requires knowledge of how the opponent updates or the payoff matrix.  
Therefore the results of Lemma \ref{lem:change} and Theorem \ref{thm:bounded_regret} extend to a variety of other settings including time-variant games (not even necessarily zero-sum) and one-agent systems with time-variant linear loss functions.
\end{remark}

\section{Recurrence and Bounded Orbits in Zero-Sum Games}\label{sec:cycle}
 \vspace{-.75cm}
\begin{figure}[!htb]
    \centering
    \begin{subfigure}{.5\textwidth}
          \centering\captionsetup{justification=centering}
          \includegraphics[scale=.45]{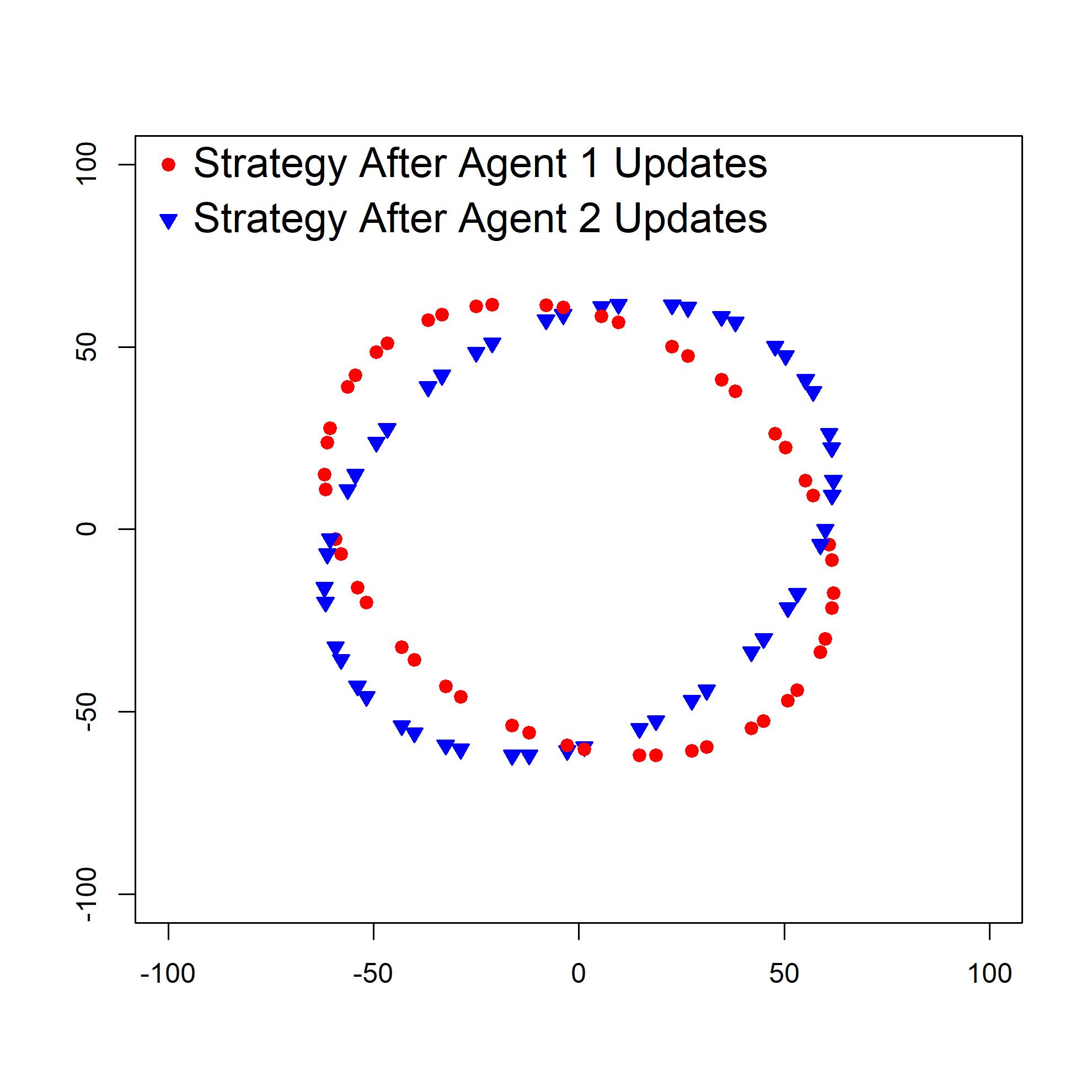}
          \caption{Strategies ``cycle" while approximating  $||x_1||^2+||x_2||^2=60$ with $\eta_1=\eta_2=1/2$.}
          \label{fig:Circular Orbit}
    \end{subfigure}%
    \begin{subfigure}{.5\textwidth}
          \centering\captionsetup{justification=centering}
          \includegraphics[scale=.45]{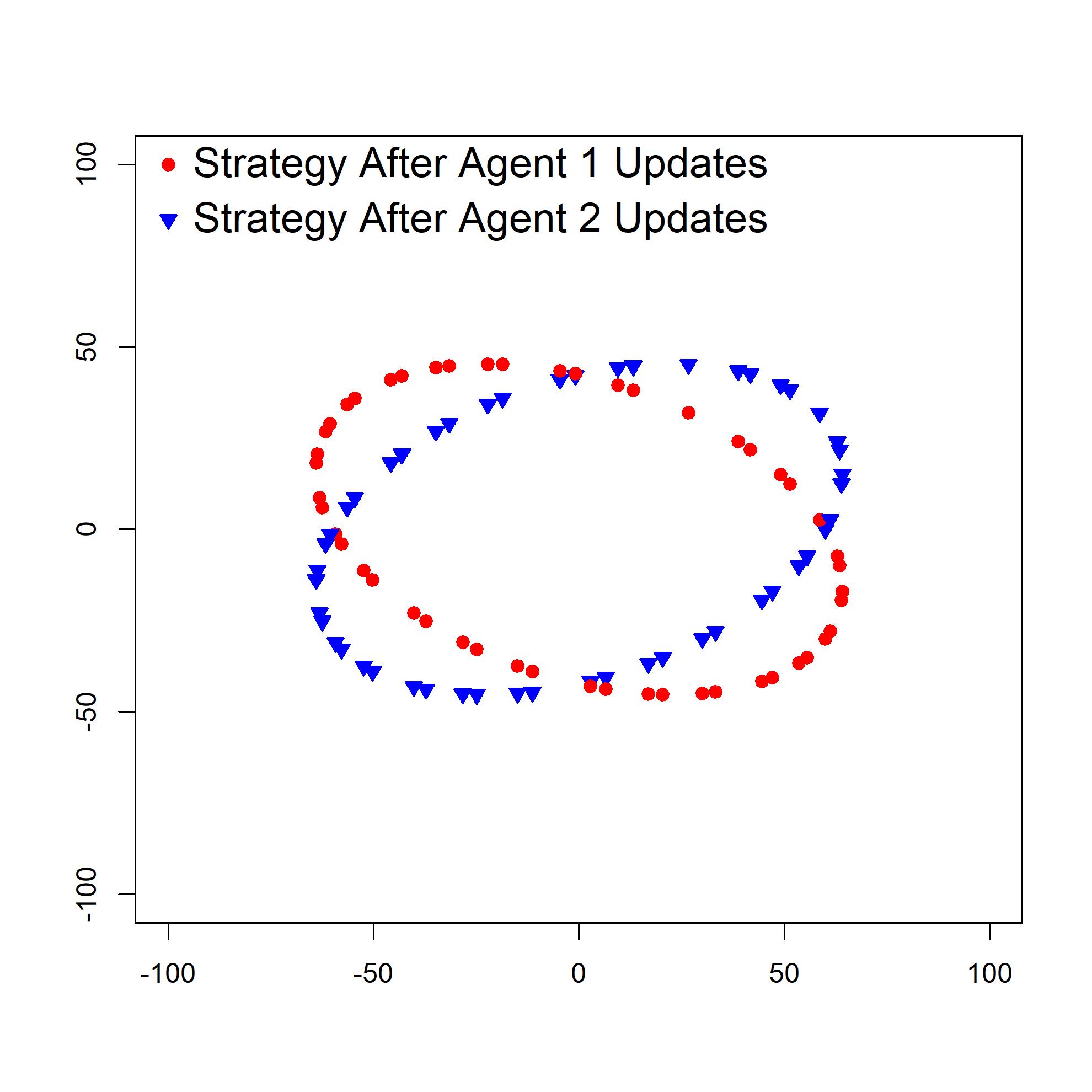}
          \caption{Strategies ``cycle" while approximating  $||x_1||^2+2||x_2||^2=60$  with $\eta_1=1, \eta_2=1/2$.}
          \label{fig:EllipticalOrbit}
    \end{subfigure}
    \caption{Initial strategy $(x_1^0,x_2^0)=(60,0)$ updated with 50 iterations of (\ref{eqn:GDAlternating}) with $A=[1]$. 
    }
    \label{fig:Ellipse}
\end{figure}

After having shown, in the previous section, that agents that play according to~\eqref{eqn:GDAlternating} have bounded regret we would like to investigate the asymptotic properties of their strategies. It has been recently proved that if each agent's strategy are updated though~\eqref{eqn:GDAlternating}, then the energy of the system $\nicefrac{\|x_1^t\|^2}{\eta_1} + \nicefrac{\|x_2^t\|^2}{\eta_2}$ is bounded above and below~\citep[Table 1]{gidel2019negative}. Thus, the strategies do not converge to the Nash equilibrium of the game. However, this boundedness, might indicate a cyclic behavior of the strategies. In the context of high dimensional dynamical system, this cyclic behavior is encompassed by the notion of Poincaré recurrence. Intuitively, a dynamical system is Poincaré recurrent if almost all trajectories return arbitrarily close to their initial position infinitely often. 


Indeed, as shown in Figure \ref{fig:Ellipse}, (\ref{eqn:GDAlternating}) appears to cycle. 
In this section, we formally prove the existence of Poincaré recurrence. 
Our analysis focuses on the strategies after both agents update -- i.e.,  $\{x_1^t,x_2^t\}_{t=0}^\infty$ -- as depicted by the blue triangles in Figure \ref{fig:Ellipse}.  
It also straightforward to extend our analysis to $\{x_1^{t+1}, x_2^t\}_{t=0}^\infty$ (depicted by the red circles in Figure \ref{fig:Ellipse}) through the same proof techniques.

More formally, in order to work with this notion of Poincaré recurrence we need to define a measure on $\mathbb{R}^d$. In the following, we will use the Lebesgue measure $\ell$. We can thus define the notion of a volume preserving transformation.

\begin{definition}[Volume Preserving Transformation~\citep{barreira}]\label{def:volume_preserving}
A volume preserving transformation is a measurable function $T:\mathbb{R}^d \to \mathbb{R}^d$ such that, for any open set $A \in \mathbb{R}^d$, we have $\ell(A) = \ell(T^{-1}(A))$.
\end{definition}
Note that $\ell(A)$ may be infinite. This notion of volume preserving transformation can be more generally defined on a orientable manifold. However, in this work, for simplicity, we will stick with the less general Definition~\ref{def:volume_preserving}. 
We can thus, state the Poincaré recurrence theorem.
\begin{theorem}[Poincaré Recurrence \citep{Poincare1890,barreira}]\label{thm:poincar_rec}  If a transformation preserves volume and has only bounded
orbits then it is Poincaré recurrent, i.e., for each open set there exist orbits that intersect this set infinitely often.
\end{theorem}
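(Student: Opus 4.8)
The plan is to reduce the statement to the pigeonhole argument underlying classical Poincar\'e recurrence, run on a region of finite Lebesgue measure. Since every nonempty open subset of $\mathbb{R}^d$ has positive measure, and since a recurrent orbit for a bounded open subset of the given set is also recurrent for the set itself, I may assume the open set $A$ satisfies $0<\ell(A)<\infty$ and is bounded. To localize, I would write $A=\bigcup_{m\ge 1}A_m$ where $A_m=A\cap\bigcap_{n\ge 0}T^{-n}\big(\overline{B}(0,m)\big)$; each $A_m$ is measurable, and $\bigcup_m A_m=A$ precisely because every forward orbit is bounded. Choosing $m$ with $\ell(A_m)>0$ and setting $E:=A_m$, $\Omega:=\overline{B}(0,m)$, we have $\ell(E)>0$, $\ell(\Omega)<\infty$, $E\subseteq\Omega$, and $T^n(E)\subseteq\Omega$ for all $n\ge 0$.

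The core step is that a set of positive measure cannot have infinitely many pairwise-disjoint, equimeasurable iterates sitting inside the finite-measure set $\Omega$. Let $F:=\{x\in E: T^n(x)\notin E\text{ for all }n\ge 1\}$ be the non-returning points of $E$. The forward images $T^0(F),T^1(F),T^2(F),\dots$ all lie in $\Omega$, and (using that the update map of \eqref{eqn:GDAlternating} is invertible and the algorithm preserves volume, and more generally invoking \citep{Poincare1890,barreira}) they have equal measure $\ell(F)$ and are pairwise disjoint: if $z\in T^i(F)\cap T^j(F)$ with $i<j$, then $w:=T^{-j}(z)\in F$ while $T^{\,j-i}(w)=T^{-i}(z)\in F\subseteq E$, contradicting that $w$ never returns to $E$. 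As $\ell(\Omega)<\infty$, this forces $\ell(F)=0$. Finally, if $x\in E$ returns to $E$ only finitely often, with last return at time $L\ge 0$, then $T^L(x)\in F$, so the set of such $x$ lies in $\bigcup_{L\ge 0}T^{-L}(F)$, a countable union of null sets. Hence almost every $x\in E$ has a forward orbit meeting $E$ --- and therefore $A$ --- infinitely often, and since $\ell(E)>0$ at least one such orbit exists, which is the claim.

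The step I expect to be the genuine obstacle is this measure-theoretic localization: on the infinite-measure space $\mathbb{R}^d$ the classical finite-measure argument fails outright (e.g.\ for a translation of $\mathbb{R}$), and the bounded-orbit hypothesis is exactly what rescues it through the decomposition $A=\bigcup_m A_m$. In the setting of this paper the localization is essentially free: the conserved quantity $\nicefrac{\|x_1\|^2}{\eta_1}+\nicefrac{\|x_2\|^2}{\eta_2}+\langle x_1,Ax_2\rangle$ confines each orbit to a bounded, invariant level set of finite Lebesgue measure, so one can take $\Omega$ to be such a level set and apply the finite-measure Poincar\'e recurrence theorem directly, with no decomposition needed.
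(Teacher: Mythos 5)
The paper never proves this statement itself: Theorem~\ref{thm:poincar_rec} is quoted from \citet{Poincare1890} and \citet{barreira}, and the paper only verifies its two hypotheses for \eqref{eqn:GDAlternating} (bounded orbits via Theorem~\ref{thm:bounded_orbits}, volume preservation via Theorem~\ref{thm:FTRL_vol}) before invoking it in Corollary~\ref{cor:cycle}. Your argument is the standard pigeonhole proof of Poincar\'e recurrence, correctly adapted to the infinite-measure ambient space: the localization $A=\bigcup_m A_m$ with $A_m=A\cap\bigcap_{n\ge 0}T^{-n}(\overline{B}(0,m))$ is exactly how the bounded-orbit hypothesis rescues the finite-measure argument, since it traps every forward iterate of $E=A_m$ inside the finite-measure ball $\Omega$, after which disjointness of the iterates of the non-returning set $F$ forces $\ell(F)=0$, and the ``infinitely often'' refinement follows from the countable union $\bigcup_{L}T^{-L}(F)$ of null sets. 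The one place you use more than the stated hypotheses is the claim that the forward images $T^n(F)$ are measurable with measure $\ell(F)$: Definition~\ref{def:volume_preserving} only controls preimages of open sets, and for a non-invertible measurable map forward images can lose measure, in which case the pigeonhole only yields $\ell(T^n(F))\to 0$. You flag this reliance on invertibility, and it is harmless for the paper's application, where the \eqref{eqn:GDAlternating} update is an invertible linear map whose inverse also has unit Jacobian determinant; alternatively, as you note in your last paragraph, Lemma~\ref{lem:constant_quantity} together with Theorem~\ref{thm:bounded_orbits} confines each orbit to an invariant region of finite Lebesgue measure, on which the classical finite-measure Poincar\'e theorem applies verbatim --- the cleanest route and the one implicit in the cited references.
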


Furthermore, we can cover
 any region of $\mathbb{R}^{k_1+k_2}$ by countably many balls of radius $\epsilon/2$, and apply the previous theorem to each ball. We conclude that almost every point returns to within an $\epsilon$ of itself. Since $\epsilon>0$ is arbitrary, we conclude that almost every initial point is recurrent.
Formally, we will thus show the following corollary that states the (Poincaré) recurrence of~\eqref{eqn:GDAlternating}.

\begin{corollary}\label{cor:cycle}
For $\sqrt{\eta_1\eta_2}\leq \frac{2}{||A||}$ 
the (\ref{eqn:GDAlternating}) dynamic is Poincar\'{e} recurrent.  Moreover, for almost all initial conditions $(x_1^0,x_2^0)$ there exists an infinite sequence of time periods $\tau_n$ such that the $\lim_{n\to \infty} (x_1^{\tau_n},x_2^{\tau_n})=(x_1^0,x_2^0)$.
\end{corollary}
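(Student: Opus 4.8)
The plan is to read one round of \eqref{eqn:GDAlternating} as a map $T:\mathbb{R}^{k_1+k_2}\to\mathbb{R}^{k_1+k_2}$, $(x_1^t,x_2^t)\mapsto(x_1^{t+1},x_2^{t+1})$, and then verify the two hypotheses of Theorem~\ref{thm:poincar_rec} for $T$: volume preservation in the sense of Definition~\ref{def:volume_preserving}, and boundedness of every orbit $\{T^n(x_1^0,x_2^0)\}_{n\ge0}$. Granting these, Theorem~\ref{thm:poincar_rec} gives that every open set is revisited infinitely often by some orbit, and the final sentence follows from the countable-cover upgrade sketched right after Theorem~\ref{thm:poincar_rec}.

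First I would establish volume preservation. The map $T$ is linear and factors as $T=\Phi_2\circ\Phi_1$, where $\Phi_1(x_1,x_2)=(x_1+\eta_1Ax_2,\,x_2)$ is agent~1's update and $\Phi_2(x_1,x_2)=(x_1,\,x_2-\eta_2A^\intercal x_1)$ is agent~2's. In block form $\Phi_1=\left(\begin{smallmatrix}I&\eta_1A\\0&I\end{smallmatrix}\right)$ and $\Phi_2=\left(\begin{smallmatrix}I&0\\-\eta_2A^\intercal&I\end{smallmatrix}\right)$ are block-triangular with identity diagonal blocks, so $\det\Phi_1=\det\Phi_2=1$ and hence $\det T=1$. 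Thus $T$ is invertible with $|\det T^{-1}|=1$, which gives $\ell(T^{-1}(B))=\ell(B)$ for every measurable $B$. This factorization is exactly why the alternating (Verlet-type) update, unlike the simultaneous one, is volume preserving.

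Next, bounded orbits. Summing the two identities of Lemma~\ref{lem:change} telescopes the cross term, which shows that $H(x_1,x_2):=\nicefrac{\|x_1\|^2}{\eta_1}+\nicefrac{\|x_2\|^2}{\eta_2}+\langle x_1,Ax_2\rangle$ is invariant under \eqref{eqn:GDAlternating}. By Cauchy--Schwarz and the weighted AM--GM inequality, $|\langle x_1,Ax_2\rangle|\le\tfrac12\|A\|\sqrt{\eta_1\eta_2}\big(\nicefrac{\|x_1\|^2}{\eta_1}+\nicefrac{\|x_2\|^2}{\eta_2}\big)$, so for $\sqrt{\eta_1\eta_2}\le 2/\|A\|$ the conserved $H$ controls $\nicefrac{\|x_1\|^2}{\eta_1}+\nicefrac{\|x_2\|^2}{\eta_2}$ from above up to a constant factor, confining each orbit to a bounded level set of $H$; this is the content of Theorems~\ref{thm:bounded_orbits} and~\ref{thm:lower_bound}, which I would cite for the exact constants and the degenerate boundary case. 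With both hypotheses verified, Theorem~\ref{thm:poincar_rec} applies.

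Finally, to get the convergent subsequence for almost every initial condition, I would apply the cover argument with balls of radius $1/(2m)$: for each $m\in\mathbb{N}$ the set $E_m$ of points that return to within $1/m$ of themselves infinitely often has full measure, hence so does $E=\bigcap_mE_m$; for $(x_1^0,x_2^0)\in E$ one picks recursively $\tau_1<\tau_2<\cdots$ with $\|T^{\tau_m}(x_1^0,x_2^0)-(x_1^0,x_2^0)\|<1/m$, which is possible because infinitely many return times within $1/m$ exist and one of them exceeds $\tau_{m-1}$. Then $(x_1^{\tau_m},x_2^{\tau_m})\to(x_1^0,x_2^0)$, as claimed. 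I expect the only delicate point to be the bounded-orbit step at the boundary $\sqrt{\eta_1\eta_2}=2/\|A\|$, where $H$ is merely positive semidefinite and its level sets are unbounded, so excluding escape along the kernel direction needs the finer argument behind Theorems~\ref{thm:bounded_orbits}--\ref{thm:lower_bound} rather than the one-line convexity bound; the volume-preservation and measure-theoretic steps are routine.
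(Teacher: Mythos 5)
Your proposal follows essentially the same route as the paper: volume preservation via the two-stage factorization of \eqref{eqn:GDAlternating} into block-triangular maps with unit-determinant Jacobians (Theorem~\ref{thm:FTRL_vol}), bounded orbits via the conserved perturbed energy of Lemma~\ref{lem:constant_quantity} combined with Cauchy--Schwarz and AM--GM (Theorem~\ref{thm:bounded_orbits}), and then Theorem~\ref{thm:poincar_rec} together with the countable-cover argument for the almost-everywhere convergent subsequence. Your caveat about the boundary case $\sqrt{\eta_1\eta_2}=2/\|A\|$ is apt --- the coefficient in \eqref{eqn:upperbound} vanishes there, so the paper's own bound is vacuous at equality as well --- but otherwise the two arguments coincide.
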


To show Corollary (\ref{cor:cycle}), it suffices to show that (\ref{eqn:GDAlternating}) has bounded orbits and that (\ref{eqn:GDAlternating}) preserves volume. 
In Section \ref{sec:bounded}, we show that the orbits are bounded if $\sqrt{\eta_1\eta_2}\leq \tfrac{2}{||A||}$.  
In Section \ref{sec:volume}, we show that volume is preserved regardless of the value of $\eta_1$ and $\eta_2$.

\subsection{Bounded Orbits}\label{sec:bounded}
\begin{figure}[!htb]
    \centering
    \begin{subfigure}{.5\textwidth}
          \centering\captionsetup{justification=centering}
          \includegraphics[scale=.45]{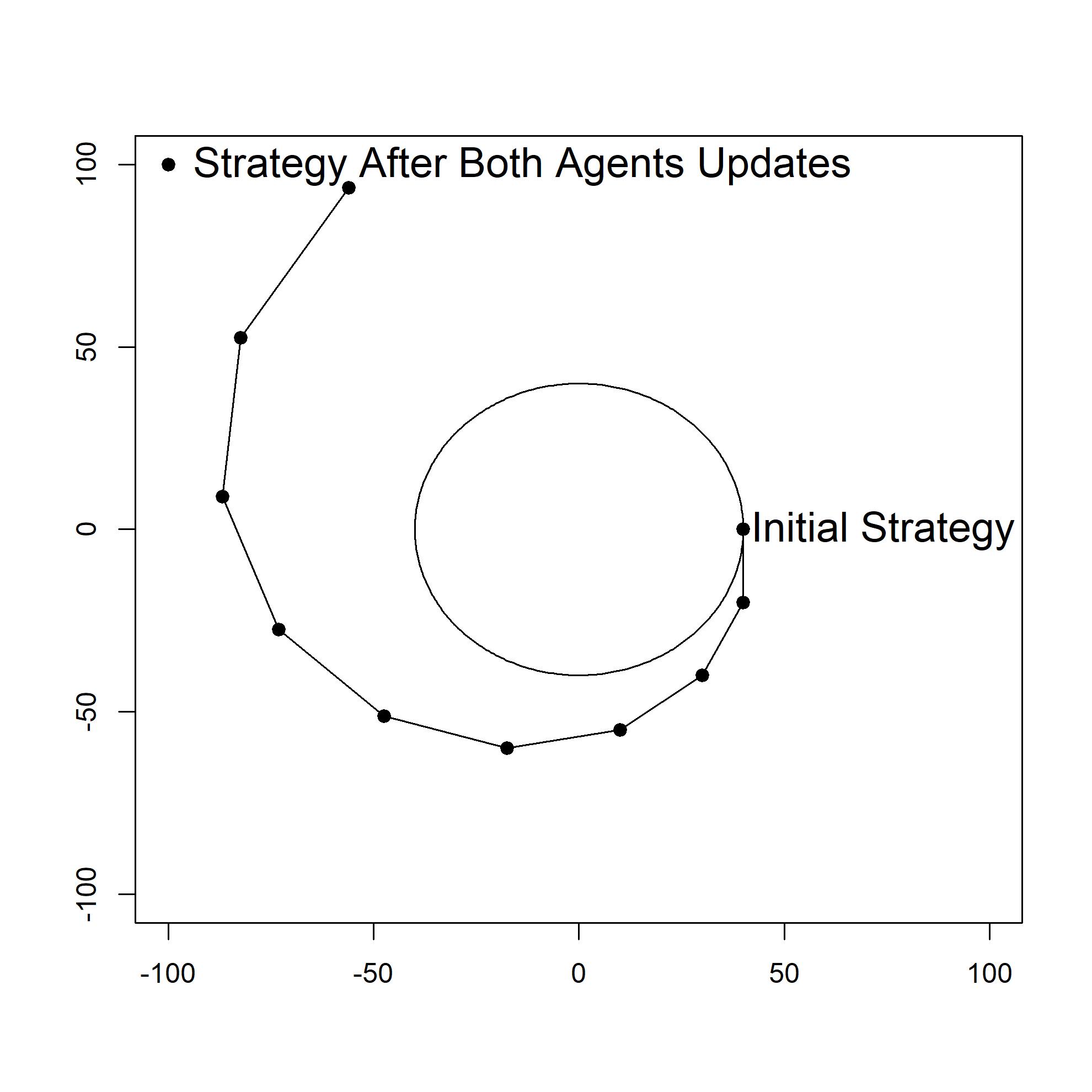}
          \caption{Strategies diverging in (\ref{eqn:GDSimultaneous}).\\\phantom{Hey}}
          \label{fig:EulerDiverges}
    \end{subfigure}%
    \begin{subfigure}{.5\textwidth}
          \centering\captionsetup{justification=centering}
          \includegraphics[scale=.45]{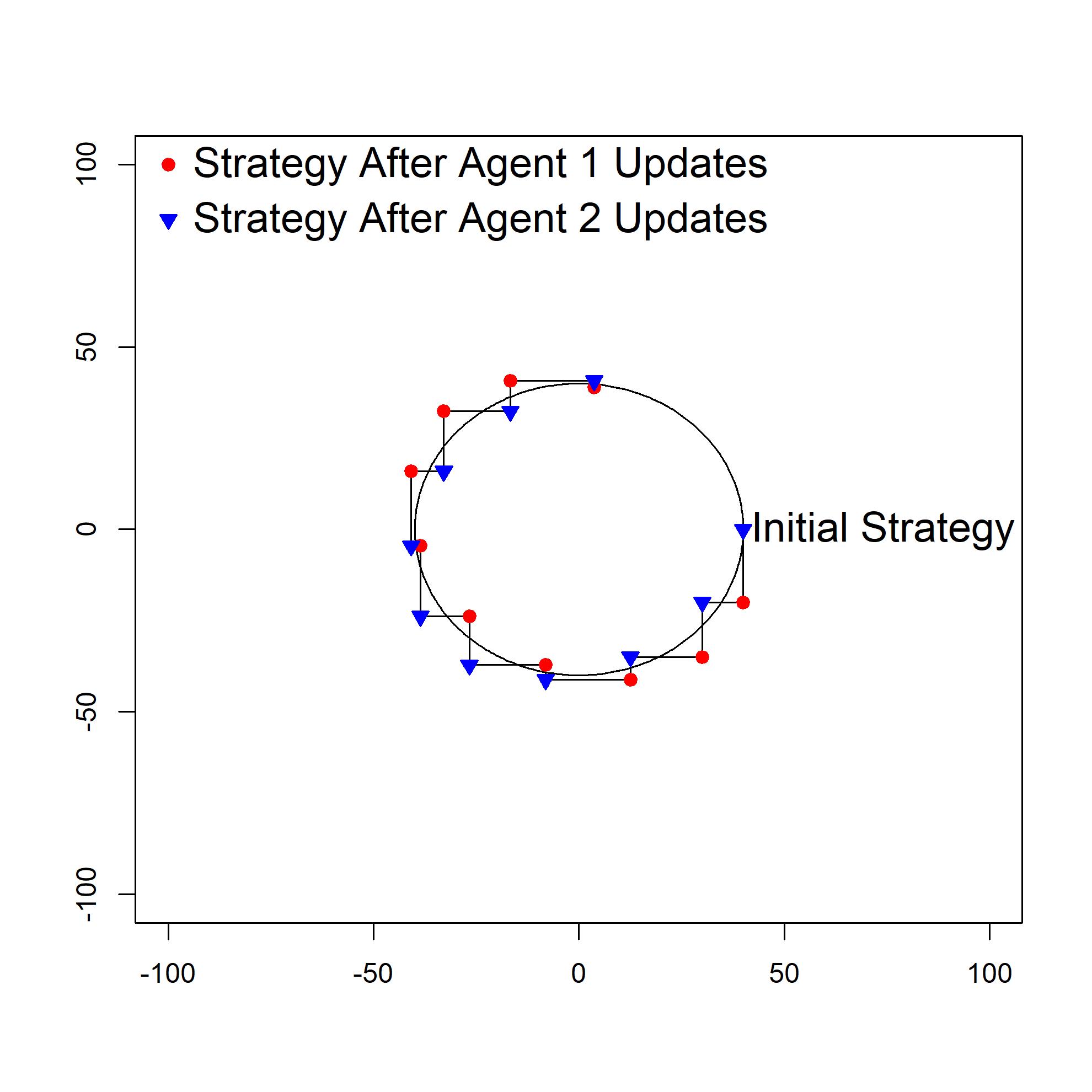}
          \caption{Strategies approximately preserving\\ energy  in (\ref{eqn:GDAlternating}).}
          \label{fig:VerletCycles}
    \end{subfigure}
    \caption{Initial strategy $(x_1^0,x_2^0)=(40,0)$ updated by 10 iterations of (\ref{eqn:GDSimultaneous}) and (\ref{eqn:GDAlternating}) with $A=[1]$, and $\eta_1=\eta_2=1/2$. 
    The circles denote  $\{x:||x_1||^2+||x_2||^2=40^2\}$.}
    \label{fig:Cycle}
\end{figure}
In this section, we prove that if both agents follow~\eqref{eqn:GDAlternating}, then their strategies are bounded. This result has been already proved by~\citet{gidel2019negative} using linear algebra arguments. However, in this section, we provide the following improvements: 
\begin{enumerate*}[series = tobecont, itemjoin = \quad, label=(\roman*)]
        \item We provide a new proof technique that is potentially generalizable to other geometries (\citet{gidel2019negative}'s proof using linear algebra argument heavily relies on the euclidean metric making it challenging to generalize to other geometries).
        \item For both the upper-bound and the lower-bound, this new proof technique has a clear interpretation in terms of energy conservation and provides an explicit dependence on the constants of the problem.
        \end{enumerate*}

The notion of conservation of energy we use in this section, is a perturbed version of the energy used in the continuous case (see Figure~\ref{fig:VerletCycles} for an illustration). If both agents use (\ref{eqn:GDAlternating}), the sum of their energies plus a payoff function is constant.
More precisely, we have the following theorem.
\begin{lemma}
    \label{lem:constant_quantity} If both agents use (\ref{eqn:GDAlternating}), we have that the following perturbed energy is constant,
\begin{equation}
 \frac{||x_1^{t}||^2}{\eta_1} + \frac{||x_2^{t}||^2}{\eta_2} + \langle x_1^{t}, Ax_2^{t}\rangle= \frac{||x_1^{0}||^2}{\eta_1} + \frac{||x_2^{0}||^2}{\eta_2} +\langle x_1^{0}, Ax_2^{0}\rangle \,.
\end{equation}
\end{lemma}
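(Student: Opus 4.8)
The plan is to establish the identity by showing that the one-step increment of the perturbed energy vanishes, and then concluding by a trivial induction on $t$. Write $E^t := \tfrac{\|x_1^t\|^2}{\eta_1} + \tfrac{\|x_2^t\|^2}{\eta_2} + \langle x_1^t, A x_2^t\rangle$; it suffices to prove $E^{t+1} = E^t$ for every $t \geq 0$, since then $E^t = E^0$ for all $t$, which is exactly the claim.

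First I would split $E^{t+1} - E^t$ into three pieces: the change $\tfrac{\|x_1^{t+1}\|^2 - \|x_1^t\|^2}{\eta_1}$ in agent $1$'s rescaled norm, the change $\tfrac{\|x_2^{t+1}\|^2 - \|x_2^t\|^2}{\eta_2}$ in agent $2$'s rescaled norm, and the change $\langle x_1^{t+1}, A x_2^{t+1}\rangle - \langle x_1^t, A x_2^t\rangle$ in the bilinear payoff term. The first two pieces are precisely the left-hand sides of \eqref{eqn:agent1} and \eqref{eqn:agent2}, so Lemma~\ref{lem:change} lets me replace them immediately by $\langle x_1^{t+1} + x_1^t, A x_2^t\rangle$ and $-\langle x_1^{t+1}, A(x_2^{t+1} + x_2^t)\rangle$ respectively. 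Expanding the six resulting inner products, the two copies of $\langle x_1^{t+1}, A x_2^t\rangle$ cancel (one with a plus sign from \eqref{eqn:agent1}, one with a minus sign from \eqref{eqn:agent2}); the two copies of $\langle x_1^t, A x_2^t\rangle$ cancel (the one from \eqref{eqn:agent1} against the $-\langle x_1^t, A x_2^t\rangle$ in the payoff difference); and the two copies of $\langle x_1^{t+1}, A x_2^{t+1}\rangle$ cancel (the one from \eqref{eqn:agent2} against the $+\langle x_1^{t+1}, A x_2^{t+1}\rangle$ in the payoff difference). Hence $E^{t+1} - E^t = 0$, and the induction finishes the proof.

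There is essentially no obstacle here: the whole argument reduces to Lemma~\ref{lem:change} together with an algebraic cancellation, and the only point requiring care is the sign bookkeeping in the cross terms arising from \eqref{eqn:agent2}. It is worth emphasizing the structural reason the cancellation succeeds: the two identities of Lemma~\ref{lem:change} are \emph{asymmetric} — \eqref{eqn:agent1} holds $x_2^t$ fixed while \eqref{eqn:agent2} holds $x_1^{t+1}$ fixed — and this asymmetry mirrors the staggering of the alternating update $x_1^{t+1} = x_1^t + \eta_1 A x_2^t$, $x_2^{t+1} = x_2^t - \eta_2 A^\intercal x_1^{t+1}$; it is exactly this staggering that makes the payoff term's increment absorb the leftover cross terms. (Alternatively one could bypass Lemma~\ref{lem:change} and substitute the two update rules directly into $E^{t+1}$, but that merely reproves a special case of Lemma~\ref{lem:change} en route and is strictly more work.)
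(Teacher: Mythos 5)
Your proposal is correct and matches the paper's own argument: the paper likewise adds the two identities \eqref{eqn:agent1} and \eqref{eqn:agent2} of Lemma~\ref{lem:change}, observes that the cross terms cancel so that the one-step change in the rescaled norms equals $\langle x_1^t, Ax_2^t\rangle - \langle x_1^{t+1}, Ax_2^{t+1}\rangle$, and concludes by induction. Your sign bookkeeping checks out, so there is nothing to add.
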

\begin{proof}
    Combining (\ref{eqn:agent1}) and (\ref{eqn:agent2}) of Lemma \ref{lem:change} yields
    \begin{align}
        \frac{||x_1^{t+1}||^2-||x_1^{t}||^2}{\eta_1}+\frac{||x_2^{t+1}||^2-||x_1^{t}||^2}{\eta_2}=\langle x_1^t,Ax_2^t\rangle - \langle x_1^{t+1},Ax_2^{t+1}\rangle.
    \end{align}
    Inductively, this implies the result claimed.
\end{proof}
If both agents' strategies are unidimentional, Lemma~\ref{lem:constant_quantity} has a geometric interpretation: the orbit of the joint strategy $\{(x_1^t,x_2^t)\}_{t=0}^\infty$ belongs to a conic section determined by the equation 
\begin{equation}
    \left(\frac{x_1}{\sqrt{\eta_1}}\right)^2 +\left(\frac{x_2}{\sqrt{\eta_2}}\right)^2 + a \cdot x_1 x_2 = 0  \,.
\end{equation}
We can show that this conic section is an ellipse if and only if $a^2 - \frac{4}{\eta_1\eta_2} \leq 0 $. Thus, for $\sqrt{\eta_1 \eta_ 2}\leq \frac{2}{a}$, the strategies are bounded. The eccentricity and the directions of the principal axis heavily depend on the values of $\eta_1$ and $\eta_2$. In Figure~\ref{fig:Ellipse}, we observe these elliptic trajectories in the joint strategies space for different values of the step-sizes.
This geometric argument can be generalized to strategies belonging to $\mathbb{R}^{k_1}\times \mathbb{R}^{k_2}$ using the singular vectors of $A$, however,
for simplicity, we provide a result in terms of weighted norms using a more concise proof.
\begin{theorem}[Bounded orbits] \label{thm:bounded_orbits} If both agents use (\ref{eqn:GDAlternating}) with arbitrary fixed step-sizes, we have that, for all $t\geq 0$,
\begin{equation}
     \left(1 - \frac{\sqrt{\eta_1 \eta_ 2} \|A\|}{2}\right) \left(\frac{||x_1^{t}||^2}{\eta_1} + \frac{||x_2^{t}||^2}{\eta_2}\right) \leq  \langle x_1^0,Ax_2^0\rangle + \tfrac{||x_1^{0}||^2}{\eta_1} + \tfrac{||x_2^{0}||^2}{\eta_2}.\label{eqn:upperbound}
\end{equation}
Thus, if they select their learning rates such that $\sqrt{\eta_1\eta_2} \leq \frac{2}{\|A\|}$, then their strategies are bounded.
\end{theorem}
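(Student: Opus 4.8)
The plan is to start from the conserved quantity established in Lemma~\ref{lem:constant_quantity}, namely that $\tfrac{\|x_1^t\|^2}{\eta_1} + \tfrac{\|x_2^t\|^2}{\eta_2} + \langle x_1^t, Ax_2^t\rangle$ equals its value at $t=0$, and then lower-bound the left-hand side by the weighted norm term alone. Concretely, I would write
\begin{align}
\frac{\|x_1^t\|^2}{\eta_1} + \frac{\|x_2^t\|^2}{\eta_2} + \langle x_1^t, Ax_2^t\rangle
\;\geq\;
\frac{\|x_1^t\|^2}{\eta_1} + \frac{\|x_2^t\|^2}{\eta_2} - |\langle x_1^t, Ax_2^t\rangle|,
\end{align}
so the whole task reduces to bounding the cross term $|\langle x_1^t, Ax_2^t\rangle|$ by a suitable fraction of $\tfrac{\|x_1^t\|^2}{\eta_1} + \tfrac{\|x_2^t\|^2}{\eta_2}$.

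The key step is the inequality $|\langle x_1^t, Ax_2^t\rangle| \leq \tfrac{\sqrt{\eta_1\eta_2}\,\|A\|}{2}\left(\tfrac{\|x_1^t\|^2}{\eta_1} + \tfrac{\|x_2^t\|^2}{\eta_2}\right)$. To get this, first apply Cauchy--Schwarz and the operator-norm bound to obtain $|\langle x_1^t, Ax_2^t\rangle| \leq \|A\|\,\|x_1^t\|\,\|x_2^t\|$. Then I would rescale: write $\|x_1^t\|\,\|x_2^t\| = \sqrt{\eta_1\eta_2}\cdot \tfrac{\|x_1^t\|}{\sqrt{\eta_1}}\cdot\tfrac{\|x_2^t\|}{\sqrt{\eta_2}}$ and apply the AM--GM inequality $ab \leq \tfrac{1}{2}(a^2+b^2)$ with $a = \|x_1^t\|/\sqrt{\eta_1}$ and $b = \|x_2^t\|/\sqrt{\eta_2}$. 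This yields exactly $\|x_1^t\|\,\|x_2^t\| \leq \tfrac{\sqrt{\eta_1\eta_2}}{2}\left(\tfrac{\|x_1^t\|^2}{\eta_1} + \tfrac{\|x_2^t\|^2}{\eta_2}\right)$, and multiplying by $\|A\|$ gives the claimed bound on the cross term.

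Combining these two observations: the conserved value at time $0$ (which is $\langle x_1^0,Ax_2^0\rangle + \tfrac{\|x_1^0\|^2}{\eta_1} + \tfrac{\|x_2^0\|^2}{\eta_2}$) is at least $\tfrac{\|x_1^t\|^2}{\eta_1} + \tfrac{\|x_2^t\|^2}{\eta_2} - \tfrac{\sqrt{\eta_1\eta_2}\|A\|}{2}\left(\tfrac{\|x_1^t\|^2}{\eta_1} + \tfrac{\|x_2^t\|^2}{\eta_2}\right) = \left(1 - \tfrac{\sqrt{\eta_1\eta_2}\|A\|}{2}\right)\left(\tfrac{\|x_1^t\|^2}{\eta_1} + \tfrac{\|x_2^t\|^2}{\eta_2}\right)$, which is precisely \eqref{eqn:upperbound}. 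For the final assertion, note that when $\sqrt{\eta_1\eta_2} \leq \tfrac{2}{\|A\|}$ the coefficient $1 - \tfrac{\sqrt{\eta_1\eta_2}\|A\|}{2}$ is nonnegative; if it is strictly positive we may divide through, and the right-hand side is a fixed constant independent of $t$, so both $\|x_1^t\|$ and $\|x_2^t\|$ stay bounded. (The boundary case $\sqrt{\eta_1\eta_2} = \tfrac{2}{\|A\|}$ can be handled separately by a limiting argument or by appealing to the conic-section picture, where the orbit degenerates but still stays in a bounded region along the relevant directions.)

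I do not anticipate a serious obstacle here; the only mild subtlety is making sure the rescaling in the AM--GM step is set up so that the weights $1/\eta_1$ and $1/\eta_2$ land in exactly the right places, and handling the degenerate equality case cleanly rather than sweeping it under the rug.
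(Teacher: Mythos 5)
Your proposal is correct and is essentially the paper's own argument: both start from the conserved perturbed energy of Lemma~\ref{lem:constant_quantity}, bound the cross term $\langle x_1^t, Ax_2^t\rangle$ via Cauchy--Schwarz, the operator-norm bound, and the weighted AM--GM step (the paper phrases it as $(\sqrt{\eta_2/\eta_1}\|x_1\|-\sqrt{\eta_1/\eta_2}\|x_2\|)^2\geq 0$), and then rearrange to obtain \eqref{eqn:upperbound} and divide by the nonnegative coefficient to conclude boundedness. If anything, you are slightly more careful than the paper about the degenerate case $\sqrt{\eta_1\eta_2}=\tfrac{2}{\|A\|}$, where the paper's final division also implicitly requires strict inequality.
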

\begin{proof}
Starting from Lemma~\ref{lem:constant_quantity}, we have that,
    \begin{align}
        \frac{||x_1^{t}||^2}{\eta_1} + \frac{||x_2^{t}||^2}{\eta_2} &= \frac{||x_1^{0}||^2}{\eta_1} + \frac{||x_2^{0}||^2}{\eta_2} +\langle x_1^{0}, Ax_2^{0}\rangle - \langle x_1^{t}, Ax_2^{t}\rangle\\
        &\leq  \frac{||x_1^{0}||^2}{\eta_1} + \frac{||x_2^{0}||^2}{\eta_2} +\langle x_1^{0}, Ax_2^{0}\rangle+||x_1^t||\cdot ||Ax_2^t|| \label{eqn:Cauchy}\\
        &\leq  \frac{||x_1^{0}||^2}{\eta_1} + \frac{||x_2^{0}||^2}{\eta_2} +\langle x_1^{0}, Ax_2^{0}\rangle+||A||\cdot ||x_1||\cdot ||x_2|| \label{eqn:L2Norm}\\
        &\leq  \frac{||x_1^{0}||^2}{\eta_1} + \frac{||x_2^{0}||^2}{\eta_2} +\langle x_1^{0}, Ax_2^{0}\rangle+\frac{\sqrt{\eta_1\eta_2}||A||}{2}\left( \frac{||x_1^{t}||^2}{\eta_1}+\frac{||x_2^{t}||^2}{\eta_2}\right) \label{eqn:Youngs}
    \end{align}
    where (\ref{eqn:Cauchy}) is the Cauchy-Schwarz inequality, (\ref{eqn:L2Norm}) follows from the definition of the $\ell_2$ matrix norm, and $(\ref{eqn:Youngs})$ follows since $(\sqrt{\eta_2/\eta_1}||x_1||-\sqrt{\eta_1/\eta_2}||x_2||)^2\geq 0$. 
    Rearranging terms yields (\ref{eqn:upperbound}). 
    
    Finally, to show that $x_i^t$ is bounded, observe that $\sqrt{\eta_1 \eta_2}\leq \tfrac{2}{||A||} \Rightarrow 1 - \frac{\sqrt{\eta_1 \eta_ 2} \|A\|}{2} \geq 0$ and
    \begin{align}
         ||x_1^{t}||^2  \leq \frac{ \langle x_1^0,Ax_2^0\rangle + \tfrac{||x_1^{0}||^2}{\eta_1} + \tfrac{||x_2^{0}||^2}{\eta_2}}{\frac{1}{\eta_1} - \sqrt{\frac{\eta_2 }{\eta_1}}\frac{\|A\|}{2}}.
    \end{align}
    Symmetrically, 
    \begin{align}
         ||x_2^{t}||^2\leq \frac{ \langle x_1^0,Ax_2^0\rangle + \tfrac{||x_1^{0}||^2}{\eta_1} + \tfrac{||x_2^{0}||^2}{\eta_2}}{\frac{1}{\eta_2} - \sqrt{\frac{\eta_1 }{\eta_2}}\frac{\|A\|}{2}}
         \,,
    \end{align}
    thereby completing the proof of the theorem. 
\end{proof}
This theorem is enough to insure that~\eqref{eqn:GDAlternating} has bounded orbits in order to satisfy the hypothesis of Theorem~\ref{thm:poincar_rec}. However, it is worth noting that with the same proof technique we can derive a lower bound on a weighted sum of the norms of each agent's strategies.
\begin{theorem}\label{thm:lower_bound} If both agents use (\ref{eqn:GDAlternating}) with $||x_1^0||^2+||x_2^0||^2>0$ and fixed step-sizes such that $\sqrt{\eta_1 \eta_2 }\leq \frac{2}{\|A\|}$, then their strategies are bounded away from the equilibrium $(\mathbf{0},\mathbf{0})$. Formally, 
\begin{equation}
   \left(1 + \frac{\sqrt{\eta_1 \eta_ 2} \|A\|}{2}\right) \left(\frac{||x_1^{t}||^2}{\eta_1} + \frac{||x_2^{t}||^2}{\eta_2}\right) \geq   \left(1 - \frac{\sqrt{\eta_1 \eta_ 2} \|A\|}{2}\right) \left(\frac{||x_1^{0}||^2}{\eta_1} + \frac{||x_2^{0}||^2}{\eta_2}\right).\label{eqn:lowerbound}
\end{equation}
\end{theorem}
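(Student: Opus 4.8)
The plan is to mirror exactly the proof of Theorem~\ref{thm:bounded_orbits}, but replace the upper bound on the cross term $\langle x_1^t, Ax_2^t\rangle$ (which gave the factor $1-\tfrac{\sqrt{\eta_1\eta_2}\|A\|}{2}$ on the left of~\eqref{eqn:upperbound}) with a \emph{lower} bound on the same cross term. Concretely, starting again from Lemma~\ref{lem:constant_quantity} rearranged as
\begin{equation*}
\frac{\|x_1^{t}\|^2}{\eta_1} + \frac{\|x_2^{t}\|^2}{\eta_2} = \frac{\|x_1^{0}\|^2}{\eta_1} + \frac{\|x_2^{0}\|^2}{\eta_2} + \langle x_1^{0}, Ax_2^{0}\rangle - \langle x_1^{t}, Ax_2^{t}\rangle,
\end{equation*}
I would bound $-\langle x_1^t, Ax_2^t\rangle \geq -\|x_1^t\|\cdot\|Ax_2^t\| \geq -\|A\|\,\|x_1^t\|\,\|x_2^t\| \geq -\tfrac{\sqrt{\eta_1\eta_2}\|A\|}{2}\big(\tfrac{\|x_1^t\|^2}{\eta_1}+\tfrac{\|x_2^t\|^2}{\eta_2}\big)$, using Cauchy--Schwarz, the operator-norm definition, and the AM--GM inequality $(\sqrt{\eta_2/\eta_1}\|x_1^t\| - \sqrt{\eta_1/\eta_2}\|x_2^t\|)^2 \geq 0$ exactly as in step~\eqref{eqn:Youngs}. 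Symmetrically, I would bound the cross term at time $0$ from below by $-\tfrac{\sqrt{\eta_1\eta_2}\|A\|}{2}\big(\tfrac{\|x_1^0\|^2}{\eta_1}+\tfrac{\|x_2^0\|^2}{\eta_2}\big)$.

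Chaining these two estimates gives
\begin{equation*}
\frac{\|x_1^{t}\|^2}{\eta_1} + \frac{\|x_2^{t}\|^2}{\eta_2} \;\geq\; \Big(1 - \tfrac{\sqrt{\eta_1\eta_2}\|A\|}{2}\Big)\Big(\tfrac{\|x_1^0\|^2}{\eta_1}+\tfrac{\|x_2^0\|^2}{\eta_2}\Big) \;-\; \tfrac{\sqrt{\eta_1\eta_2}\|A\|}{2}\Big(\tfrac{\|x_1^t\|^2}{\eta_1}+\tfrac{\|x_2^t\|^2}{\eta_2}\Big),
\end{equation*}
and moving the last term to the left-hand side yields precisely~\eqref{eqn:lowerbound}. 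Finally, to read off that the strategies are bounded away from $(\mathbf 0,\mathbf 0)$, note that when $\sqrt{\eta_1\eta_2}\leq \tfrac{2}{\|A\|}$ the coefficient $1-\tfrac{\sqrt{\eta_1\eta_2}\|A\|}{2}$ is nonnegative and (when the inequality is strict) positive, so the right-hand side of~\eqref{eqn:lowerbound} is a strictly positive constant whenever $\|x_1^0\|^2+\|x_2^0\|^2>0$; hence $\tfrac{\|x_1^t\|^2}{\eta_1}+\tfrac{\|x_2^t\|^2}{\eta_2}$ is uniformly bounded below by a positive constant for all $t$.

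I do not expect any genuine obstacle here: the argument is the same three-inequality sandwich as Theorem~\ref{thm:bounded_orbits} with the sign of the Cauchy--Schwarz step flipped. The only point requiring a little care is the boundary case $\sqrt{\eta_1\eta_2} = \tfrac{2}{\|A\|}$, where $1-\tfrac{\sqrt{\eta_1\eta_2}\|A\|}{2}=0$ and the lower bound~\eqref{eqn:lowerbound} degenerates to the trivial statement $\big(1+\tfrac{\sqrt{\eta_1\eta_2}\|A\|}{2}\big)\big(\tfrac{\|x_1^t\|^2}{\eta_1}+\tfrac{\|x_2^t\|^2}{\eta_2}\big)\geq 0$; so the ``bounded away from equilibrium'' conclusion in the prose should really be understood for $\sqrt{\eta_1\eta_2} < \tfrac{2}{\|A\|}$ (or when the initial energy strictly exceeds the threshold needed for the bound to be positive). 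With that caveat the theorem follows immediately.
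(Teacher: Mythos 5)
Your proposal is correct and is essentially the paper's own argument: the paper proves Theorem~\ref{thm:lower_bound} by repeating the chain of Theorem~\ref{thm:bounded_orbits} from Lemma~\ref{lem:constant_quantity} with the sign of the Cauchy--Schwarz step flipped (bounding the cross terms from below rather than above), exactly as you do, including absorbing the time-$t$ cross term into the left-hand side to produce the factor $1+\tfrac{\sqrt{\eta_1\eta_2}\|A\|}{2}$. Your caveat about the degenerate boundary case $\sqrt{\eta_1\eta_2}=\tfrac{2}{\|A\|}$ matches the paper's own hedge that the right-hand side of \eqref{eqn:lowerbound} is positive once the step-sizes are taken sufficiently small and $\|x_1^0\|^2+\|x_2^0\|^2>0$.
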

\begin{proof}
The proof follows identically to Theorem \ref{thm:bounded_orbits} after replacing the Cauchy-Schwarz inequality $-\langle x_1^t, Ax_2^t \rangle \leq ||x_1^t||\cdot ||Ax_2^t||$ with the Cauchy-Schwarz inequality $\langle x_1^t, Ax_2^t \rangle \leq ||x_1^t||\cdot ||Ax_2^t||$. 
By taking $\eta_1$ and $\eta_2$ sufficiently small and $||x_1^0||^2+||x_2^0||^2>0$, the right hand side of \ref{eqn:lowerbound} is positive and $(x_1^t,x_2^t)$ is bounded away from $(\mathbf{0},\mathbf{0})$. 
\end{proof}

Together, Theorems.~\ref{thm:bounded_orbits} and \ref{thm:lower_bound} indicate that (\ref{eqn:GDAlternating}) approximately preserves the energy $\nicefrac{||x_1||^2}{\eta_1}+\nicefrac{||x_2||^2}{\eta_2}$ as depicted in Figures \ref{fig:Ellipse} and \ref{fig:VerletCycles}. The smaller $\eta_1\eta_2$ is, the closer to a circle the trajectories are. 
This energy preservation does not occur for~\eqref{eqn:GDSimultaneous} as illustrated in Figure~\ref{fig:EulerDiverges}.

\subsection{Volume Preservation of Alternating Play}\label{sec:volume}

\begin{figure}[!htb]
    \centering
    \begin{subfigure}{.5\textwidth}
          \centering\captionsetup{justification=centering}
          \includegraphics[scale=.45]{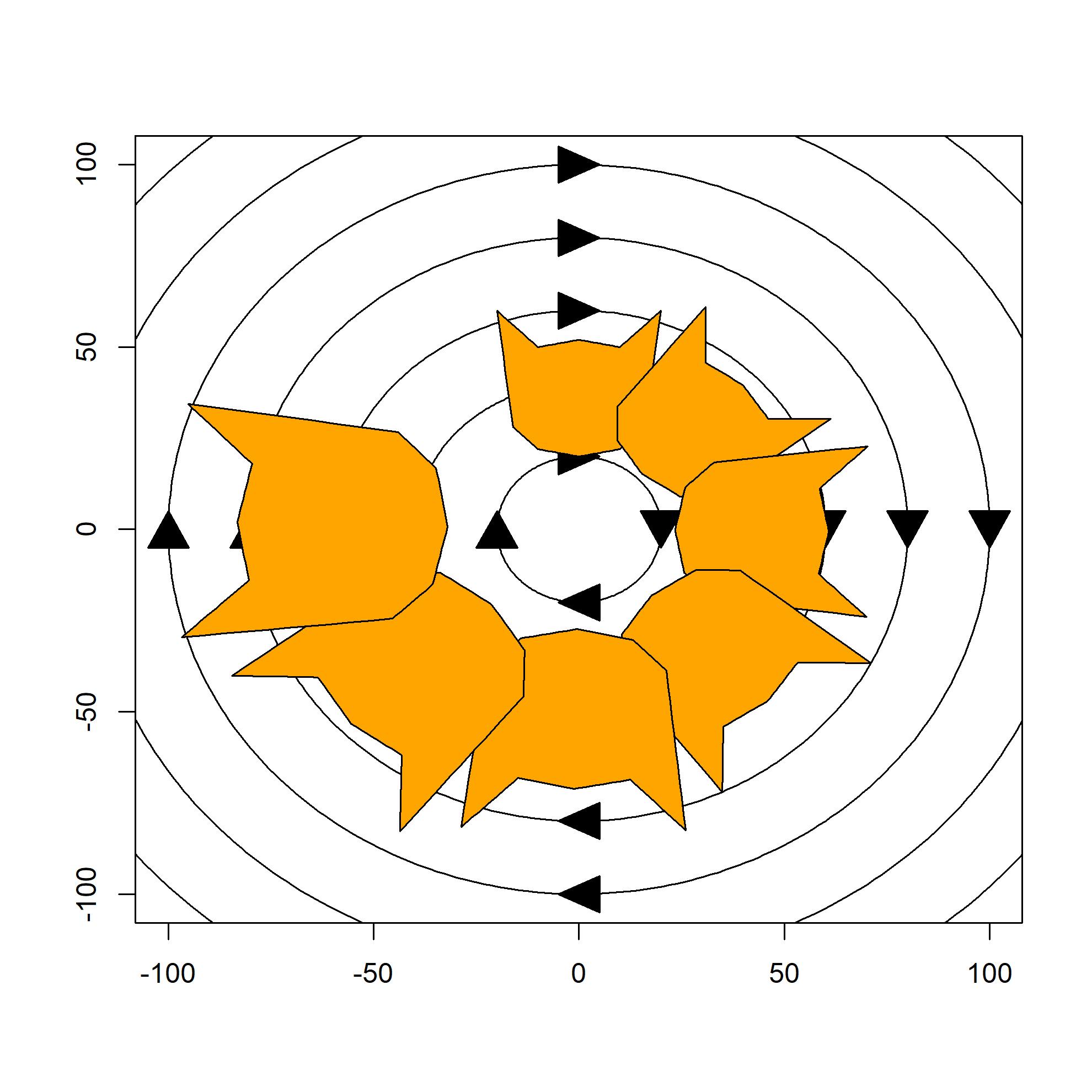}
          \caption{Volume expands when strategies are\\ updated with (\ref{eqn:GDSimultaneous}).}
          \label{fig:EulerVolume}
    \end{subfigure}%
    \begin{subfigure}{.5\textwidth}
          \centering\captionsetup{justification=centering}
          \includegraphics[scale=.45]{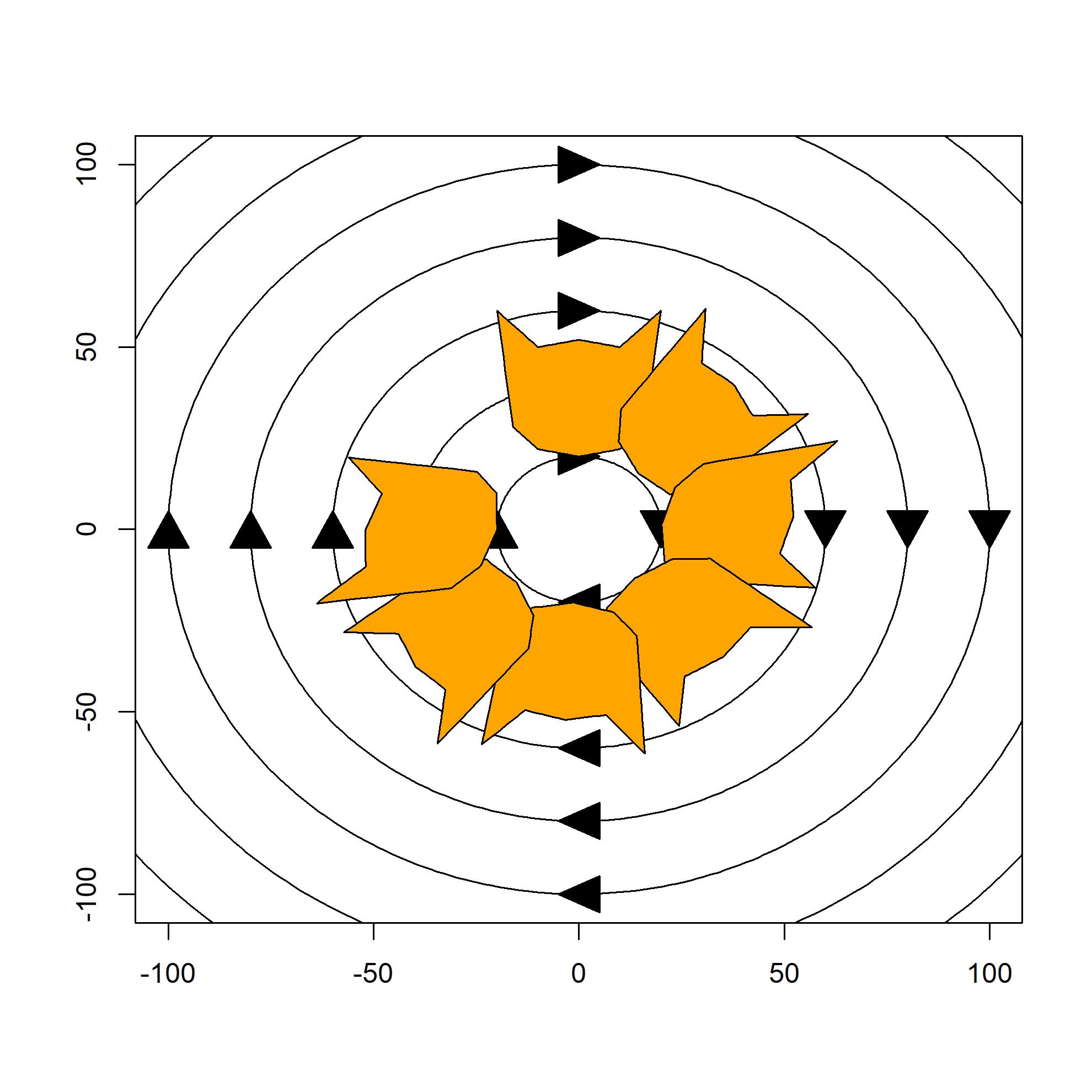}
          \caption{Volume is preserved when strategies are\\ updated with (\ref{eqn:GDAlternating}).}
          \label{fig:VerletVolume}
    \end{subfigure}
    \caption{A collection of strategies (a cat) updated by $0, 4, 8, 12, 16, 20,$ and $24$ iterations of (\ref{eqn:GDSimultaneous}) and (\ref{eqn:GDAlternating}) with $x_1,x_2\in \mathbb{R}$, $A=[1]$, and $\eta_1=\eta_2=1/5$. 
    The circles denote level sets of $||x_1||^2+||x_2||^2.$}
    \label{fig:test}
\end{figure}
In this section, we show that the transformation~\eqref{eqn:GDAlternating}  is volume preserving (Definition~\ref{def:volume_preserving}) as depicted in Figure \ref{fig:EulerVolume}. 
This is in contrast to (\ref{eqn:GDSimultaneous}) which expands (see Figure \ref{fig:EulerVolume} and \citep{cheung2019vortices}). 
To show this result, we make use of the following Theorem from \citep{rudin1987real}.

\begin{theorem}[\cite{rudin1987real} Theorem 7.26]
\label{thm:CitedVolume}
Let ${\cal X}$ be an open set in  $\mathbb{R}^k$ and $T : {\cal X} \rightarrow \mathbb{R}^k$  be an injective differentiable function with continuous partial derivatives, the Jacobian of which is non-zero for every $x\in {\cal X}$. Then for any real-valued, compactly supported, continuous function $f$, with support contained in $T({\cal X})$,
\begin{align}
\int_{T({\cal X})}f({v})d{v} =\int_{\cal X}f(T(x ))|det(J_T)({x})|d{x}.
\end{align}
In particular, taking $f(v)=1$, 
\begin{align}
\int_{T({\cal X})}d{v} =\int_{\cal X}|det(J_T)({x})|d{x}
\end{align}
and $T$ is volume preserving if $T$ is continuous differentiable, injective, and $|det(J_T)|=1$. 
\end{theorem}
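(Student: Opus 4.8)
The plan is to prove the change-of-variables formula in four stages, treating the set/measure version as the core and deducing the stated integral identity from it. \emph{Stage 0 (reduction).} Splitting $f=f^{+}-f^{-}$ reduces the claim to $f\ge 0$, so it suffices to establish, for every nonnegative continuous compactly supported $g$ with support in $T(\mathcal{X})$, the one-sided inequality
\[
\int_{T(\mathcal{X})} g(v)\,dv \;\le\; \int_{\mathcal{X}} g(T(x))\,|\det(J_T)(x)|\,dx,
\]
which I will refer to as $(\star)$. The full equality then follows in Stage~3 by applying $(\star)$ to both $T$ and its inverse. Moreover, $(\star)$ itself reduces to the purely measure-theoretic inequality $\ell(T(E))\le\int_E|\det(J_T)|\,dx$: taking $g=\mathbf{1}_{T(E)}$ and using injectivity to write $g(T(x))=\mathbf{1}_E(x)$ recovers $(\star)$ for indicators, and general $g\ge 0$ follows by simple-function approximation and monotone convergence.

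\emph{Stage 1 (linear model).} First I would settle the case of a fixed invertible linear map $L$, establishing $\ell(L(E))=|\det L|\,\ell(E)$ for every measurable $E$. The argument factors $L$ into elementary maps --- coordinate permutations, single-coordinate scalings, and shears --- and computes the effect of each on Lebesgue measure directly from translation invariance and Fubini's theorem (permutations and shears preserve volume, a scaling by $c$ multiplies it by $|c|$). Since both $\det$ and the volume-distortion factor are multiplicative under composition, the formula follows for general $L$. This identifies $|\det(J_T)(x)|$ as the correct \emph{local} volume-distortion factor.

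\emph{Stage 2 (local linearization; the crux).} Using the hypotheses ($C^1$, injective, $\det J_T\neq 0$) together with the inverse function theorem, $T$ is a diffeomorphism of $\mathcal{X}$ onto the open set $T(\mathcal{X})$, and $T^{-1}$ is again a map of the same type. The heart of the proof is the covering estimate $\ell(T(E))\le\int_E|\det(J_T)|\,dx$. To obtain it I would cover $E$ (up to negligible error) by a fine grid of small closed cubes $Q_j\subset\mathcal{X}$ centered at points $a_j$, and write $T(x)=T(a_j)+J_T(a_j)(x-a_j)+r_j(x)$. Conjugating by $J_T(a_j)^{-1}$ and invoking continuity of the partial derivatives (so that $\|J_T(a_j)^{-1}(J_T(\xi)-J_T(a_j))\|$ is uniformly small for $\xi\in Q_j$) shows, via the mean value inequality, that $T(Q_j)$ lies in the image of $Q_j$ under the affine map $x\mapsto T(a_j)+J_T(a_j)(x-a_j)$ dilated by a factor $1+\omega(\delta)$, where $\omega$ is the modulus of continuity of $J_T$ and $\delta$ the mesh. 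Applying Stage~1 to each affine piece and summing the resulting Riemann sum gives $\ell(T(E))\le(1+\omega(\delta))^{k}\int_E|\det(J_T)|\,dx$; letting $\delta\to 0$ removes the error. This is the step I expect to be the main obstacle, since it requires controlling the nonlinear remainder uniformly across the grid and then passing to the limit.

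\emph{Stage 3 (sandwich to equality).} Finally, I would apply $(\star)$ twice. Applied to $T$ with $g=f$ it yields $\int_{T(\mathcal{X})}f\,dv\le\int_{\mathcal{X}}f(T(x))|\det(J_T)(x)|\,dx$. Applied to $S=T^{-1}$ and to the integrand $\tilde g(x)=f(T(x))|\det(J_T)(x)|$ --- which is continuous and compactly supported on $\mathcal{X}=S(T(\mathcal{X}))$ --- it yields the reverse inequality, once we simplify $\tilde g(S(v))\,|\det(J_S)(v)|$. By the chain rule $J_S(v)=[J_T(S(v))]^{-1}$, so $|\det(J_S)(v)|=1/|\det(J_T)(S(v))|$, and the two Jacobian factors cancel to leave exactly $f(v)$. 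Combining the two inequalities gives the claimed equality, and taking $f\equiv 1$ on compacta recovers the volume identity $\int_{T(\mathcal{X})}dv=\int_{\mathcal{X}}|\det(J_T)|\,dx$.
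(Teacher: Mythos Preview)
The paper does not provide its own proof of this statement: Theorem~\ref{thm:CitedVolume} is quoted from Rudin's \emph{Real and Complex Analysis} and invoked without proof as a black box in the argument for Theorem~\ref{thm:FTRL_vol}. There is therefore nothing in the paper to compare your proposal against. For what it is worth, your four-stage outline --- the linear case via factorization into elementary maps, the one-sided covering estimate $\ell(T(E))\le\int_E|\det J_T|\,dx$ obtained by local linearization on a fine grid of cubes, and the sandwich using $T^{-1}$ and the chain rule to upgrade the inequality to equality --- is essentially the proof Rudin himself gives, and is a correct sketch.
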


Given $T:\mathbb{R}^n \to \mathbb{R}^n$, the Jacobian of the transformation is $J_T= \left[ \partial T / \partial x_1  \cdots \partial T/\partial x_n \right]$.  
The determinant of the Jacobian of~\eqref{eqn:GDAlternating} is simple to compute once noticed that~\eqref{eqn:GDAlternating} can be written as the composition of two transformations that have block triangular Jacobians. This leads to the following volume preservation theorem.

\begin{theorem}[Volume Preservation]
\label{thm:FTRL_vol}
    (\ref{eqn:GDAlternating}) is volume preserving for any step-sizes and any measurable set of initial conditions. 
\end{theorem}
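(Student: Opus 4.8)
The plan is to invoke Theorem~\ref{thm:CitedVolume} by exhibiting \eqref{eqn:GDAlternating} as a smooth, injective map on $\mathbb{R}^{k_1+k_2}$ whose Jacobian determinant has absolute value $1$. First I would write the one-step map $T:(x_1^t,x_2^t)\mapsto(x_1^{t+1},x_2^{t+1})$ explicitly as the composition $T = T_2 \circ T_1$, where $T_1(x_1,x_2) = (x_1 + \eta_1 A x_2,\ x_2)$ is agent $1$'s update and $T_2(x_1,x_2) = (x_1,\ x_2 - \eta_2 A^\intercal x_1)$ is agent $2$'s update (note that $T_2$ acts on the \emph{already-updated} $x_1$, which is exactly why the composition reproduces \eqref{eqn:GDAlternating}). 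Both $T_1$ and $T_2$ are affine, hence continuously differentiable on all of $\mathbb{R}^{k_1+k_2}$.

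Next I would compute the Jacobians. Writing them in block form with respect to the splitting $\mathbb{R}^{k_1}\times\mathbb{R}^{k_2}$,
\begin{align}
J_{T_1} = \begin{bmatrix} I_{k_1} & \eta_1 A \\ 0 & I_{k_2}\end{bmatrix}, \qquad
J_{T_2} = \begin{bmatrix} I_{k_1} & 0 \\ -\eta_2 A^\intercal & I_{k_2}\end{bmatrix}.
\end{align}
Each is block triangular with identity diagonal blocks, so $\det J_{T_1} = \det J_{T_2} = 1$, and by the chain rule and multiplicativity of the determinant, $\det J_T = \det J_{T_2}\cdot\det J_{T_1} = 1$, independent of $\eta_1,\eta_2$ and $A$. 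This also shows $J_T$ is everywhere invertible.

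It remains to check injectivity of $T$. Since $T_1$ and $T_2$ are each invertible affine maps (with inverses $T_1^{-1}(x_1,x_2) = (x_1 - \eta_1 A x_2, x_2)$ and $T_2^{-1}(x_1,x_2) = (x_1, x_2 + \eta_2 A^\intercal x_1)$), their composition $T$ is a bijection of $\mathbb{R}^{k_1+k_2}$, in particular injective. Applying Theorem~\ref{thm:CitedVolume} with ${\cal X} = \mathbb{R}^{k_1+k_2}$ (or, to respect the ``open set'' hypothesis literally, noting that it suffices to verify the volume identity on every open ball and pass to the union) and $f\equiv 1$ gives $\ell(T({\cal X})) = \ell({\cal X})$ for every open set, and since $T$ is a bijection $\ell(T^{-1}(A)) = \ell(A)$ for every open $A$, which is Definition~\ref{def:volume_preserving}. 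I do not anticipate a genuine obstacle here; the only mild subtlety is bookkeeping the order of the two sub-updates so that the composition $T_2\circ T_1$ really matches \eqref{eqn:GDAlternating} rather than \eqref{eqn:GDSimultaneous} — the distinction is precisely that $T_2$ sees the updated first coordinate — and, if one wants to be pedantic, reconciling Theorem~\ref{thm:CitedVolume}'s open-set hypothesis with the fact that we want the conclusion on all of $\mathbb{R}^d$.
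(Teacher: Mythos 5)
Your proof is correct and follows essentially the same route as the paper: decomposing \eqref{eqn:GDAlternating} into the two sequential affine sub-updates with block-triangular Jacobians of determinant one, checking injectivity, and invoking Theorem~\ref{thm:CitedVolume}. The only cosmetic difference is that you multiply the two Jacobian determinants for the composed map, whereas the paper argues each stage is volume preserving separately; this changes nothing of substance.
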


\begin{proof}
    (\ref{eqn:GDAlternating}) can be written as the following two-stage update where agent 1 first updates her strategy:
    \begin{equation}\tag{Stage 1}\label{eqn:Stage1}
        \begin{aligned}
            x_1^{t+1/2}&= x_1^t + \eta_1 A x_2^t\\
            x_2^{t+1/2}&= x_2^t
        \end{aligned}
    \end{equation}
    followed by agent 2 updating his strategy: 
    \begin{equation}\tag{Stage 2}\label{eqn:Stage2}
        \begin{aligned}
            x_1^{t+1}&= x_1^{t+1/2}\\
            x_2^{t+1}&= x_2^{t+1/2}- \eta_2 A^\intercal x_1^{t+1/2}.
        \end{aligned}
    \end{equation}
    The red circles in Figure \ref{fig:Ellipse} refer to (\ref{eqn:Stage1}) while the blue triangles refer to (\ref{eqn:Stage2}). 
    To show (\ref{eqn:GDAlternating}) is volume preserving, it suffices to show that (\ref{eqn:Stage1}) and (\ref{eqn:Stage2}) are volume preserving.
    Both arguments are identical and we show the result only for (\ref{eqn:Stage1}).
    
    By Theorem \ref{thm:CitedVolume}, it suffices to show (\ref{eqn:Stage1}) is continuously differentiable, injective, and has a Jacobian with determinant equal to one.  
    Trivially, (\ref{eqn:Stage1}) is continuously differentiable.
    Next, we show it is injective. 
    
Suppose $(y_1^t,y_2^t)$ map to the same $(x_1^{t+1/2}, x_2^{t+1/2})$ when updated with (\ref{eqn:Stage1}), i.e., 
    \begin{align}
        x_1^{t+1/2}&=x_1^t+Ax_2^t=y_1^t+Ay_2^t,\\
        x_2^{t+1/2}&=x_2^t=y_2^t.
    \end{align}
     Combining both equalities yields $(x_1^t,x_2^t)=(y_1^t,y_2^t)$ and  (\ref{eqn:Stage1}) is injective.  
  
    Next, we show that the determinant of the Jacobian in (\ref{eqn:Stage1}) is one. 
    The Jacobian is 
\begin{align}
    J_1=\left[ 
        \begin{array}{c c} 
            I_{k_1}    &   \eta_1 A \\
            0                       &   I_{k_2}
        \end{array}    
    \right]
\end{align} 
and $\det(J_1)= \det(I_{k_1}) \cdot \det (I_{k_2})=1$. 
(\ref{eqn:Stage1}) satisfies all three conditions and therefore is volume-preserving thereby completing the proof of the theorem.
\end{proof}


\section{Conclusion}
\label{sec:Conclusion}

We study a natural implementation of gradient descent dynamics in unconstrained zero-sum games.
In this implementation, the max and min agent take turns updating their strategies after observing the behavior of their opponent. This dynamic has remarkable properties. First, agents have bounded regret. In fact, this is true not only in zero-sum games but in any general game and online optimization setting.  Moreover, in the max-min optimization setting the agents' strategies remain bounded for all time and the dynamics preserve volume. In combination these last two properties imply recurrence, i.e., that the orbits cycle back infinitely often arbitrarily close to their initial conditions. Such advantageous properties were formerly only known for continuous-time dynamics (e.g., \citep{Soda14,GeorgiosSODA18}) and moreover are not true for simultaneous gradient descent-ascent updates, which is divergent away from equilibrium  \citep{BaileyEC18} and in fact, formally chaotic \citep{cheung2019vortices}.  

At its core, our approach is based on recent research advances that enable connections between traditionally separate areas such as game theory, online optimization, Hamiltonian dynamics and numerical analysis. Specifically, \cite{BaileyAAMAS19} show a formal interpretation of continuous time dynamics in games as Hamiltonian systems. Based on this connection, and the numerous advantageous properties of the continuous-time dynamics it make sense to try to discretize them in a way that mimics the continuous dynamics to a high level of accuracy. From numerical analysis, it is well known that the primitive Euler approximation, i.e. the standard online optimization algorithms are poor approximations. Instead, more elaborate tools have been developed, e.g., symplectic integrators that satisfy the volume preserving property of Hamiltonian dynamics as well as other advantageous properties, e.g., approximate energy preservation (see \citep{hairer2006geometric,Hairer05}). Such symplectic integrator schemes are thus widely applied to the calculations of long-term evolution of Hamiltonian systems, e.g., planetary, molecular dynamics, a.o. Alternating gradient descent-ascent, corresponding to a symplectic integration technique known as leapfrogging (Verlet integration), is thus bringing this point of view to game dynamics. We hope that this link will allow for the development of new exciting results and a more exhaustive exploration of the connections between physics, online optimization, game theory, chaos theory, and numerical analysis.

\subsection*{Acknowledgments} %
\label{par:paragraph_name}

 James P. Bailey and Georgios Piliouras acknowledge  MOE AcRF Tier 2 Grant 2016-T2-1-170,
  grant PIE-SGP-AI-2018-01 and NRF 2018 Fellowship NRF-NRFF2018-07.
Gauthier Gidel research is partially supported by NSERC Discovery Grant RGPIN-2017-06936 and by a Borealis AI graduate fellowship.

\bibliographystyle{abbrvnat} 
\bibliography{sigproc2,IEEEabrv,Bibliography,refer}

\end{document}